\newfont{\go}{ygoth.tfm scaled 1200}  
\newtheorem{theorem}{Theorem}[section]
\newtheorem{lemma}[theorem]{Lemma}
\newtheorem{corollory}[theorem]{Corollary}
\newcommand{\norm}[1]{\left\|#1\right\|}
\newcommand{\com}[2]{ \left [ {#1}, {#2} \right ]}
\newcommand{\cwedge}[1]{\mathop{\wedge}_{{}^{#1}} }
\newcommand{\hook}{\raisebox{-0.35ex}{\makebox[0.6em][r]
{\scriptsize $-$}}\hspace{-0.15em}\raisebox{0.25ex}{\makebox[0.4em][l]{\tiny
 $|$}}}
 \newcommand{\HRule}{\rule{\linewidth}{0.25mm}}
\numberwithin{equation}{section}
\newcommand{\n}[1]{\label{#1}}
\newcommand{\be}{\begin{equation}}
\newcommand{\ee}{\end{equation}}
\newcommand{\ba}{\begin{eqnarray}}
\newcommand{\ea}{\end{eqnarray}}
\newcommand{\beq}{\begin{equation}}
\newcommand{\eeq}{\end{equation}}
\newcommand{\beqa}{\begin{eqnarray}}
\newcommand{\eeqa}{\end{eqnarray}}
\newcommand{\eq}[1]{(\ref{#1})}
\newcommand{\nt}[1]{\nabla^T_{X_{#1}}}
\newcommand{\dt}{d^T\!}
\newcommand{\delt}{\delta^T\!}
\newcommand{\mD}{{\cal D}}
\begin{document}
\begin{flushright}
\small
DAMTP-2010-14 \\
OCU-PHYS 326 \\
\vspace{.5cm}
\today \\
\normalsize
\end{flushright}

\begin{center}

\vspace{1.5cm}

{\LARGE {\bf Symmetries of the Dirac operator with skew-symmetric torsion \\}}

\vspace{1cm}

{Tsuyoshi Houri\let\thefootnote\relax\footnotetext{{\em Emails}: houri@sci.osaka-cu.ac.jp (T. Houri), D.Kubiznak@damtp.cam.ac.uk (D. Kubiz\v n\'ak), \\ \mbox{} \hspace{1.8cm}c.m.warnick@damtp.cam.ac.uk (C. Warnick), yasui@sci.osaka-cu.ac.jp (Y. Yasui)}$^{, a}$, David Kubiz\v n\'ak$^{2, b}$, Claude Warnick$^{3, b, c}$, Yukinori Yasui$^{4, a}$}

\end{center}

{\footnotesize $^a$ Department of Mathematics and Physics, Graduate School of Science, \\ \mbox{}\hspace{.8cm} Osaka City University, 3-3-138 Sugimoto, Sumiyoshi, Osaka 558-8585, JAPAN}

{\footnotesize $^b$ DAMTP, University of Cambridge, Wilberforce Road, Cambridge CB3 0WA, UK}

{\footnotesize $^c$ Queens' College, Cambridge, CB3 9ET}
\\
\\
\noindent
\HRule

\begin{abstract}
In this paper, we consider the symmetries of the Dirac operator
derived from a connection with skew-symmetric torsion, $\nabla^T$. We find that
the generalized conformal Killing--Yano tensors give rise to symmetry
operators of the massless Dirac equation, provided an explicitly given
anomaly vanishes. We show that this gives rise to symmetries of the Dirac operator in the case of strong K\"ahler with torsion (KT) and strong hyper-K\"ahler with torsion (HKT) manifolds.
\end{abstract}
\vspace{-.1cm}
\noindent
\HRule
\section{Introduction}

In the study of symmetries of pseudo-Riemannian manifolds, an important
role is played by conformal Killing--Yano tensors 
\cite{Yano, Yano:1953}. These generalize
conformal Killing vectors to higher rank anti-symmetric tensors and have many elegant properties which can be exploited 
both in mathematical and physical contexts. We refer the reader to four recent PhD dissertations and the references therein
\cite{Kress}.

Recently, there has been interest in the generalization of the
conformal Killing--Yano tensors to pseudo-Riemannian manifolds with
skew-symmetric torsion first introduced in \cite{Yano:1953}, then
re-discovered in \cite{Kubiznak:2009qi,  Wu:2009cn}. Such manifolds
occur in
superstring theories where the torsion form may be identified (up to a
factor) with a $3$-form field occurring naturally in the
theory. Geometries with torsion are also of interest in the context of
non-integrable special Riemannian geometries~\cite{Agricola:2006tx}. Important examples include K\"ahler
 with torsion (KT) and hyper K\"ahler with torsion (HKT) manifolds
 \cite{Howe:1996kj, Grantcharov:1999kv}, which have applications across mathematics and
 physics.

A key property of conformal Killing--Yano tensors in the absence of
torsion is that one may construct from them symmetry operators for the
massless Dirac equation \cite{Carter:1979fe, Benn:1996ia, Cariglia:2003kf}. Conversely any first-order symmetry operator
of the massless Dirac equation is constructed from a conformal
Killing--Yano tensor \cite{Benn:2004, Acik:2008wv}. In
this paper, we will consider the question of when a {\em generalized conformal Killing--Yano} (GCKY) 
tensor with respect to a connection with torsion $T$ gives rise to a symmetry of an appropriate massless Dirac
equation with torsion. For reasons which we will motivate below, the
appropriate Dirac operator to consider is not that constructed from
the connection on the spin bundle with torsion $T$, which we denote $\nabla^T$, but that constructed from the
connection with torsion $T/3$, which we denote $\mD$ and can relate to the Dirac operator of the connection $\nabla^T$ by
\be
\mD = D^T + \frac{T}{2}\,.
\ee
This `modified' Dirac operator is that introduced by Bismut in \cite{Bismut:1989}.
One may alternatively choose to view our result as relating the symmetries of
the Dirac operator with torsion $T$ to the GCKY tensors of the
connection with torsion $3T$.

We now state the main theorem of the paper, see subsequent sections
for conventions.
\begin{theorem} \label{mainthm}
Let $\omega$ be a generalized conformal Killing-Yano (GCKY) $p$--form obeying
\begin{equation}
\nabla^T_X \omega-\frac{1}{p+1}{X}\hook {d}^T \omega+\frac{1}{n-p+1}
{X}^\flat \wedge {\delta}^T \omega=0,
\end{equation}
then the operator
\be
L_\omega=e^a\omega\nt{a}+\frac{p}{p+1}\dt \omega -\frac{n-p}{n-p+1}\delt\omega+\frac{1}{2}T\omega
\ee
satisfies
\be
\mD L_\omega=\omega \mD^2+\frac{(-1)^p}{p+1}\dt \omega \mD+\frac{(-1)^p}{n-p+1}\delt\omega\mD-A\,.
\ee
The anomaly $A$, given explicitly below, contains no term where a
derivative acts on a spinor and depends on $T$ and $\omega$. In the case where $A$ vanishes,
$L_\omega$ is a symmetry operator for the massless Dirac equation.
\end{theorem}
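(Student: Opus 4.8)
The plan is to carry out the computation in a local orthonormal frame $\{e^a\}$ with dual vectors $\{X_a\}$, working inside the Clifford bundle so that $\omega$ and the torsion $3$-form $T$ act on spinors by Clifford multiplication, and writing $\mD = e^a\mD_{X_a}$ as the Clifford contraction of the torsion-$T/3$ spin connection $\mD_X$ (consistent with the identification $\mD = D^T + \tfrac{T}{2}$ of the excerpt). The first task is bookkeeping: relate the three relevant derivatives $\nt{a}$, $\mD_{X_a}$ and $\nabla^g$. Their pairwise differences on any tensor (and on spinors) are explicit expressions linear in $T$, so that $\mD_{X_a}(e^b\omega)$ equals $(\nt{a}\omega)e^b$ plus an algebraic torsion correction; I also record the elementary Clifford identities $e^a\omega = e^a\w\omega + X_a\hook\omega$, $\omega e^a = (-1)^p(e^a\w\omega - X_a\hook\omega)$, together with the conventions $\dt\omega = e^a\w\nt{a}\omega$ and $\delt\omega = -X_a\hook\nt{a}\omega$. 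These are what translate the covariant-derivative form of the GCKY equation into the Clifford-product language in which $L_\omega$ and $\mD$ are written.

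With these in hand I would compute $\mD L_\omega\psi$ by a direct Leibniz expansion of $e^b\mD_{X_b}\bigl(e^a\omega\nt{a}\psi\bigr)$ together with $\mD$ acting on the zeroth-order pieces $\tfrac{p}{p+1}\dt\omega$, $-\tfrac{n-p}{n-p+1}\delt\omega$ and $\tfrac12 T\omega$ of $L_\omega$. The expansion splits into three kinds of terms: (i) terms in which both derivatives survive — here the GCKY hypothesis enters decisively, since substituting $\nt{a}\omega = \tfrac{1}{p+1}X_a\hook\dt\omega - \tfrac{1}{n-p+1}X_a^\flat\w\delt\omega$ and contracting the double frame sum $e^b(\,\cdot\,)e^a$ collapses it into $\omega\,\mD^2\psi$ plus precisely $\tfrac{(-1)^p}{p+1}\dt\omega\,\mD\psi + \tfrac{(-1)^p}{n-p+1}\delt\omega\,\mD\psi$, the sign $(-1)^p$ appearing because the contraction pushes $e^a$ to the right of $\omega$; (ii) curvature-type terms from the failure of $\mD_{X_b}$ and $\nt{a}$ to commute and from the symmetric part of the frame derivatives; (iii) purely algebraic terms, linear and quadratic in $T$, from the torsion correction in $\mD_{X_b}(e^a\omega)$, from $\mD$ hitting the zeroth-order pieces, and from the $\tfrac12 T\omega$ term. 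The role of that last term (absent in the torsion-free Benn--Kress operator) is exactly to absorb the discrepancy between $D^T$ and $\mD$ so that the first-order derivative terms close with the stated coefficients; verifying this cancellation is a short but essential check.

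Everything that is not a $\mD^2$ or a $\mD$ term is then collected and declared to be $-A$. Two points must be verified about it. Structurally, $A$ carries no derivative on the spinor: this is automatic, because every occurrence of $\nt{}\psi$ in steps (ii)--(iii) was either resummed into $\mD\psi$, paired into $\mD^2\psi$, or annihilated by antisymmetry, leaving only the curvature of $\nt{}$, covariant derivatives of $T$ and $\omega$ (entering through $\dt T$, $\delt T$, $\nt{}T$, $\dt\omega$, $\delt\omega$) and Clifford-algebraic contractions of $T$ with $\omega$, $\dt\omega$, $\delt\omega$ — i.e. a zeroth-order operator depending only on $T$ and $\omega$, as asserted. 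One then writes this out to obtain the explicit formula referenced in the statement. The final clause is immediate: if $A=0$ and $\mD\psi=0$, then $\mD(L_\omega\psi) = \omega\,\mD^2\psi + \tfrac{(-1)^p}{p+1}\dt\omega\,\mD\psi + \tfrac{(-1)^p}{n-p+1}\delt\omega\,\mD\psi = 0$, so $L_\omega$ preserves $\ker\mD$ and is therefore a symmetry operator of the massless (modified) Dirac equation.

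The main obstacle I anticipate is not any single step but the torsion bookkeeping across the three connections at once: the GCKY equation is phrased with $\nt{}$ and with $\dt,\delt$, whereas $\mD$ and $\mD^2$ are built from the torsion-$T/3$ connection, so every Leibniz step generates corrections that must be tracked with the correct powers of $T$ and the correct $(-1)^p$ signs distinguishing left from right Clifford multiplication. The commutator $[\mD_{X_b},\nt{a}]$ is not a single Riemann tensor but a combination of the torsion-$T$ curvature and $\nt{}T$ terms, and organizing the residue into a manifestly derivative-free $A$ — and then into a clean closed form — is the delicate part; the remainder is the standard Clifford-algebra manipulation familiar from the torsion-free case.
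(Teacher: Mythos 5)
Your strategy is essentially the paper's own proof: a direct Leibniz/Clifford expansion of $\mD L_\omega$, use of the GCKY equation to resum the terms with a single derivative on the spinor into $\frac{(-1)^p}{p+1}\dt\omega\,\mD+\frac{(-1)^p}{n-p+1}\delt\omega\,\mD$, identification of $\omega\,\mD^2$ via the Lichnerowicz-type formula for $\mD^2$, and collection of the derivative-free remainder as $-A$ (with the integrability conditions of the GCKY equation then giving its closed form). The only cosmetic difference is that the paper writes $\mD=e^a\nt{a}+\tfrac{1}{2}T$ and works throughout with the single connection $\nabla^T$, so no $[\mD_{X_b},\nt{a}]$ commutators ever arise, which streamlines exactly the three-connection bookkeeping you flag as the main obstacle.
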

We also exhibit this operator in terms of $\gamma$-matrices, a
formalism perhaps more familiar to physicists. In the cases where
either $d^T\omega=0$ or $\delta^T \omega =0$, we
can exhibit operators which (anti)-commute with the Dirac operator.
\begin{corollory} \label{GKYcor}
Let $\omega$ be a generalized Killing-Yano (GKY) $p$-form such that
$A$ vanishes. Then there exists an operator $K_\omega$ such that
\be
\mD K_\omega+(-1)^p K_\omega \mD = 0.
\ee
\end{corollory}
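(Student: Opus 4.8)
The plan is to descend from Theorem~\ref{mainthm}. For a generalized Killing--Yano $p$--form $\omega$ one has $\delta^T\omega=0$, so the term $\delt\omega$ drops out of $L_\omega$, leaving $L_\omega=e^a\omega\nt{a}+\frac{p}{p+1}\dt\omega+\frac12 T\omega$, and, since $A$ vanishes by hypothesis, the master identity collapses to
\be
\mD L_\omega=\omega\,\mD^2+\frac{(-1)^p}{p+1}\dt\omega\;\mD=\Bigl(\omega\,\mD+\frac{(-1)^p}{p+1}\dt\omega\Bigr)\mD .
\ee
In particular $L_\omega$ preserves $\ker\mD$. It does not, however, (anti)commute with $\mD$ by itself: rewriting $\omega\,e^a$ with the Clifford relations shows the right-hand factor differs from $L_\omega$ by terms of the schematic form $(X_a\hook\omega)\,\nt{a}$, $\omega T-(-1)^pT\omega$ and a multiple of $\dt\omega$, so $K_\omega$ will have to be a corrected version of $L_\omega$.

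To construct the correction I would derive the ``conjugate'' of the identity above, namely an identity $\mD\,\widehat L_\omega=L_\omega\mD$ (again modulo the vanishing anomaly) for a first-order operator $\widehat L_\omega$ of the same shape as $L_\omega$ but with the positions of $\nabla^T$ and $\mD$ interchanged. The quickest route is formal adjunction: taking adjoints in the displayed identity, using the symmetry of the modified Dirac operator $\mD$ together with the fact that the coefficients $\frac{p}{p+1}$ and $\frac12$ in $L_\omega$ are precisely those which make $L_\omega$ (anti)self-adjoint up to the Clifford reversion sign of a $p$--form, one obtains $\widehat L_\omega$ and the conjugate identity, with the ratio of reversion signs $\sigma_{p+1}/\sigma_p=(-1)^p$ accounting for the sign in the statement. (Alternatively one may Hodge-dualize: for a GKY $\omega$ with closed torsion $3$--form the dual $\widetilde\omega$ is a generalized \emph{closed} conformal Killing--Yano form, so Theorem~\ref{mainthm} applies to $\widetilde\omega$, and one transports the result back via $\widetilde\omega\,\psi=\Gamma\,\omega\,\psi$, $\Gamma$ the chirality element, which (anti)commutes with $\mD$ with a definite sign.) Playing the two one-sided identities against each other, I would then set $K_\omega=L_\omega-(-1)^p\widehat L_\omega$ (up to normalization): this is a symmetric combination of $L_\omega$ and $\widehat L_\omega$ when $p$ is odd and an antisymmetric one when $p$ is even, and one checks directly that $\mD K_\omega+(-1)^pK_\omega\mD=0$. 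Finally I would record $K_\omega$ in $\gamma$--matrix form, as in the main text.

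The main obstacle is purely computational: establishing the conjugate identity (equivalently, the near-self-adjointness of $L_\omega$) is a careful Clifford-algebra bookkeeping exercise --- commuting $\nabla^T$ past Clifford products of $\omega$ and of the torsion $3$--form $T$, controlling the cross-terms $(X_a\hook\omega)\,\nt{a}$ and the non-commutativity $\omega T\neq T\omega$, and checking that it is exactly the hypotheses $\delta^T\omega=0$ and $A=0$ that annihilate the leftover. Keeping track of the reversion (or chirality) signs is what pins the result to the sign $(-1)^p$ and separates the commuting ($p$ odd) from the anticommuting ($p$ even) case.
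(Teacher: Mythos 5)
Your overall strategy --- upgrading the one-sided intertwining relation of Theorem \ref{mainthm} to a graded commutation relation by subtracting from $L_\omega$ an operator of the form $(-1)^p\widehat L_\omega$ --- is the same as the paper's, and for a GKY form the master identity does collapse to $\mD L_\omega=P\mD$ with $P=\omega\mD+\frac{(-1)^p}{p+1}\dt\omega$, as you say. The gap is in the mechanism you propose for producing $\widehat L_\omega$. If you assume the two identities $\mD L_\omega=P\mD$ and $\mD\widehat L_\omega=L_\omega\mD$, then for $K_\omega=L_\omega-(-1)^p\widehat L_\omega$ a one-line computation gives $\mD K_\omega+(-1)^pK_\omega\mD=(P-\widehat L_\omega)\mD$, so your scheme forces $\widehat L_\omega=P$; but the conjugate identity then fails. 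Indeed, the elementary Clifford relation $e^a\omega=(-1)^p\omega e^a+2X^a\hook\omega$ gives, for a GKY form, the operator identity
\be
\mD\,\omega=(-1)^p\omega\mD+2X^a\hook\omega\,\nt{a}+\dt\omega+\tfrac{1}{2}[T,\omega]_{p+1}=L_\omega+\tfrac{1}{p+1}\dt\omega\,,
\ee
whence $\mD P=L_\omega\mD+\frac{(-1)^p}{p+1}\bigl(\mD\,\dt\omega+(-1)^p\dt\omega\,\mD\bigr)$, and the last bracket is a genuine first-order differential operator (it contains $2X^a\hook(\dt\omega)\,\nt{a}$) which does not vanish under the hypotheses $\delt\omega=0$, $A=0$. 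Equivalently, your $K_\omega$ differs from a correct one by the zeroth-order piece $\frac{1}{p+1}\dt\omega$, which does not graded-commute with $\mD$. The two auxiliary devices you invoke (formal adjunction of $L_\omega$, or Hodge-dualizing to a GCCKY form and conjugating by the volume element) are also unsubstantiated as stated: neither the near-self-adjointness of $L_\omega$ nor the behaviour of the anomaly under $\omega\to\ast\omega$ is established, and the chirality element's (anti)commutation with $\mD$ depends on the parity of $n$.

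The paper's resolution is simpler and needs no conjugate identity: define $K_\omega=L_\omega-(-1)^p\omega\mD$ (i.e.\ take $\widehat L_\omega=\omega\mD$, \emph{without} the $\dt\omega$ correction). Then the displayed formula gives $(-1)^p\mD\,\omega\,\mD=(-1)^pL_\omega\mD+\frac{(-1)^p}{p+1}\dt\omega\,\mD$, and in
\be
\mD K_\omega+(-1)^pK_\omega\mD=\mD L_\omega-(-1)^p\mD\,\omega\,\mD+(-1)^pL_\omega\mD-\omega\mD^2
\ee
the two terms $\frac{(-1)^p}{p+1}\dt\omega\,\mD$ --- one coming from the theorem, one from $\mD\,\omega\,\mD$ --- cancel, leaving zero (more generally $\frac{2(-1)^p}{n-p+1}\delt\omega\,\mD$, which vanishes in the GKY case). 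Written out, $K_\omega=2X^a\hook\omega\,\nt{a}+\frac{p}{p+1}\dt\omega+T\cwedge{1}\omega-\frac16 T\cwedge{3}\omega$, which is manifestly first order.
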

\begin{corollory} \label{GCCKYcor}
Let $\omega$ be a generalized closed conformal Killing-Yano (GCCKY) $p$-form such that
$A$ vanishes. Then there exists an operator $M_\omega$ such that
\be
\mD M_\omega-(-1)^p M_\omega \mD = 0.
\ee
\end{corollory}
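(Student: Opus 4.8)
The plan is to deduce the statement directly from Theorem~\ref{mainthm}, by showing that $L_\omega$ fails to (anti)commute with $\mD$ only by a term which is itself a right multiple of $\mD$, and then absorbing that term by adding a suitable multiple of the operator $\omega\mD$. Since $\omega$ is GCCKY we have $\dt\omega=0$, and $A=0$ by hypothesis, so Theorem~\ref{mainthm} collapses to
\[
\mD L_\omega=\omega\mD^2+\frac{(-1)^p}{n-p+1}\,\delt\omega\,\mD ,
\]
where, as usual, a form placed in an operator expression denotes Clifford multiplication by that form.

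First I would record a purely algebraic rewriting of the composite operator $\mD\omega$ (meaning: Clifford-multiply by $\omega$, then apply $\mD$). Expanding $\mD=e^a\nt{a}+\tfrac12 T$ via the Leibniz rule on $\omega\psi$ gives $\mD(\omega\psi)=(e^a\,\widehat{\nt{a}\omega})\psi+(e^a\omega\nt{a})\psi+\tfrac12 T\omega\psi$; the Clifford decomposition $e^a\cdot\beta=e^a\wedge\beta+X^a\hook\beta$ and the definitions of $\dt,\delt$ give $e^a\,\widehat{\nt{a}\omega}=\dt\omega-\delt\omega$, and comparison with the formula defining $L_\omega$ then yields, for \emph{any} $p$-form $\omega$,
\[
\mD\omega=L_\omega+\frac{1}{p+1}\,\dt\omega-\frac{1}{n-p+1}\,\delt\omega .
\]
In the closed case this reads $L_\omega=\mD\omega+\tfrac{1}{n-p+1}\,\delt\omega$; the essential point is that the $\delt\omega$-coefficient here is normalised so that, after multiplying on the right by $(-1)^p\mD$, it reproduces exactly the $\delt\omega\,\mD$ term in the reduced Theorem~\ref{mainthm}.

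Combining the two displays, the $\delt\omega\,\mD$ contributions cancel in the difference, leaving
\[
\mD L_\omega-(-1)^p L_\omega\mD=\omega\mD^2-(-1)^p(\mD\omega)\,\mD .
\]
Separately, associativity of composition gives $\mD(\omega\mD)=(\mD\omega)\,\mD$ and $(\omega\mD)\,\mD=\omega\mD^2$, hence
\[
(-1)^p\bigl(\mD(\omega\mD)-(-1)^p(\omega\mD)\mD\bigr)=(-1)^p(\mD\omega)\,\mD-\omega\mD^2 ,
\]
and adding the two identities the right-hand sides cancel. Therefore
\[
M_\omega:=L_\omega+(-1)^p\,\omega\,\mD
\]
satisfies $\mD M_\omega-(-1)^p M_\omega\mD=0$, as required; using the rewriting one may present it as $M_\omega=\mD\omega+(-1)^p\,\omega\,\mD-\tfrac{1}{n-p+1}\,\delt\omega$, the graded \emph{anti}commutator of $\mD$ with Clifford multiplication by $\omega$ up to an order-zero term, which coincides with $L_\omega$ on $\ker\mD$.

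The only step that needs genuine care is the algebraic identity of the second paragraph: one has to keep the Clifford sign conventions and the Bismut shift $\tfrac12 T$ straight, and confirm that its $\delt\omega$-coefficient is precisely $\tfrac{1}{n-p+1}$ — otherwise a residual $\delt\omega\,\mD$ term survives the subtraction and cannot be removed by any order-zero modification of $L_\omega$ (internal consistency of Theorem~\ref{mainthm} guarantees it comes out right). Everything downstream is the one-line cancellation. Corollary~\ref{GKYcor} follows from the mirror computation: for a GKY form $\delt\omega=0$, the roles of $\dt\omega$ and $\delt\omega$ swap, the residual term changes sign, and $K_\omega:=L_\omega-(-1)^p\,\omega\,\mD$ obeys $\mD K_\omega+(-1)^p K_\omega\mD=0$.
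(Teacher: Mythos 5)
Your proposal is correct and follows essentially the same route as the paper: the paper also sets $M_\omega\equiv L_\omega+(-1)^p\omega\mD$ (and $K_\omega\equiv L_\omega-(-1)^p\omega\mD$ for the GKY case) and cancels the residual $\delt\omega\,\mD$ term using the same expansion of the composite operator $\mD\circ\omega$, stated there as $\mD\omega=(-1)^p\omega\mD+2e^a\wedge\omega\nt{a}+\dt\omega-\delt\omega+\frac{1}{2}[T,\omega]_p$, which is equivalent to your identity $\mD\omega=L_\omega+\frac{1}{p+1}\dt\omega-\frac{1}{n-p+1}\delt\omega$. The coefficient check you flag as the delicate point does come out as you claim, so the argument is complete.
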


The anomaly term of Theorem \ref{mainthm} obviously vanishes for
vanishing torsion, by the result of Benn and Charlton \cite{Benn:1996ia}. It is also
shown to vanish in \cite{Houri:2010} for the case of the GCKY tensors
exhibited by the charged Kerr-NUT metrics \cite{Chow:2008fe} (including the Kerr-Sen
black hole spacetime \cite{Sen}). The anomaly also vanishes in the case of strong
K\"ahler with torsion (KT) and strong hyper-K\"ahler with torsion (HKT) metrics, giving the result:
\begin{corollory} \label{HKTcor}
A strong KT metric admits one operator and a strong HKT metric three operators which commute with the
modified Dirac operator $\mD$.
\end{corollory}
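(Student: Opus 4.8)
The plan is to obtain the commuting operators of the corollary as the operators $M_\omega$ furnished by Corollary \ref{GCCKYcor}, applied to the fundamental $2$-forms of the (hyper-)K\"ahler structure; the only substantive point will be that the anomaly $A$ of Theorem \ref{mainthm} vanishes for these forms. Recall that a strong KT structure is a metric $g$ together with an integrable complex structure $J$ parallel with respect to the metric connection $\nabla^T$ whose skew-symmetric torsion $3$-form $T$ obeys $dT=0$, and that a strong HKT structure is the same data for a triple $J_1,J_2,J_3$ satisfying the quaternionic relations and parallel for one and the same $\nabla^T$ with $dT=0$. In either case the fundamental $2$-forms $\omega_c$, $\omega_c(X,Y)=g(J_cX,Y)$, are $\nabla^T$-parallel. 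Since $\dt$ and $\delt$ are first-order operators built from $\nabla^T$, they annihilate any $\nabla^T$-parallel form, so $\dt\omega_c=0$ and $\delt\omega_c=0$; hence each $\omega_c$ satisfies the GCKY equation of Theorem \ref{mainthm} (every term vanishing separately) and is in particular a generalized closed conformal Killing--Yano $2$-form.

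I would then substitute $\omega=\omega_c$ into the explicit expression for the anomaly. Every term of $A$ in which a derivative lands on $\omega$, and every term carrying a factor $\dt\omega$ or $\delt\omega$, drops out immediately, leaving a residue built only from $T$, $\omega_c$ and the curvature of $\nabla^T$. Using the first Bianchi-type identity for $\nabla^T$ — which in the presence of torsion expresses the cyclic sum of $R^T(X,Y)Z$ in terms of $\nabla^T T$ and quadratic-in-$T$ terms, the $\nabla^T T$ part being proportional to $dT$ — this residue should reorganize into a piece proportional to $dT$ plus a purely algebraic expression in $T$ and $\omega_c$. The strong condition $dT=0$ kills the first piece, while the second vanishes by the compatibility of the KT torsion with the complex structure (for a KT connection $T$ carries no $(3,0)+(0,3)$ component relative to $J_c$, so $\omega_c$ and $T$ commute as Clifford elements). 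Carrying out this reduction in the $\gamma$-matrix form of $A$ exhibited in the paper should give $A=0$.

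Granting $A=0$, Corollary \ref{GCCKYcor} applies to each $\omega_c$; since $p=2$ is even, $(-1)^p=1$ and the operator $M_{\omega_c}$ satisfies $\mD M_{\omega_c}-M_{\omega_c}\mD=0$, i.e.\ it commutes with the modified Dirac operator. A strong KT metric carries a single such $\omega$, hence one commuting operator, whereas a strong HKT metric carries the three forms $\omega_1,\omega_2,\omega_3$, each parallel for the same $\nabla^T$ and with the same torsion, so the previous paragraph applies verbatim to all three and produces three commuting operators. (Every $\omega_c$ is simultaneously a GKY form, but in even degree Corollary \ref{GKYcor} would yield only anticommuting operators, which is why the GCCKY route is the appropriate one.)

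The main obstacle is the anomaly step: it genuinely requires the explicit form of $A$, and the work lies in checking that, once the $\nabla^T\omega_c$, $\dt\omega_c$ and $\delt\omega_c$ contributions are discarded, the remainder separates cleanly into a $dT$-term removed by strongness and an algebraic term in $T$ and $\omega_c$ removed by the compatibility of $T$ with $J_c$. The remaining two steps are essentially formal.
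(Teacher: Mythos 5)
Your overall strategy coincides with the paper's: the K\"ahler form(s) are $\nabla^T$-parallel, hence satisfy $\dt F=\delt F=0$, hence are GCKY (indeed simultaneously GKY and GCCKY), and the whole content of the corollary is the vanishing of the anomaly, after which Corollary \ref{GCCKYcor} with $p=2$, $(-1)^p=1$ yields the commuting operators $M_{F_i}$. The step you flag as ``the main obstacle,'' however, is where your sketch goes astray. The anomaly of Theorem \ref{mainthm} has already had all curvature eliminated from it via the integrability condition \eq{intcond}; in the form \eq{anomaly2} it reads
\[
A=\frac{1}{2}\left(\frac{[T,\dt\omega]_p}{p+1}-\frac{[T,\delt\omega]_{p+1}}{n-p+1}\right)+\frac{1}{4}[dT,\omega]+\frac{\dt\dt\omega}{p+1}+\frac{\delt\delt\omega}{n-p+1}\,,
\]
so for a parallel form the entire residue is $\frac{1}{4}[dT,F]$ and nothing else. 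There is no leftover curvature to be reorganized by a first Bianchi identity, and no ``purely algebraic expression in $T$ and $\omega_c$'' surviving once the $dT$-proportional piece is removed: the residue is already manifestly proportional to $dT$, and strongness kills it outright. In particular the final ingredient you invoke --- that $T$, having no $(3,0)+(0,3)$ component, commutes with $F$ in the Clifford algebra --- is neither needed nor established in your sketch (the relevant bracket is $[dT,F]$, not $[T,F]$, and the latter is $-2F\cwedge{1}T$, essentially the derivation action of $J$ on $T$, which does not vanish for a general KT torsion of type $(2,1)+(1,2)$). So your argument reaches the right conclusion by the right overall route, but the one step carrying all the weight is left as a sketch that attacks a difficulty which is not present; the correct and much shorter step is to read off $A=\frac{1}{4}[dT,F]$ from \eq{anomaly2} and invoke $dT=0$.
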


In section \ref{prelim} we introduce our notation and conventions and derive some
basic results which we require in the proof of Theorem
\ref{mainthm}. Section \ref{CCKYsec} introduces the GCKY tensors,
together with some of their properties. The proof of the main theorem
takes up section \ref{proof}. 

\section{Preliminaries \label{prelim}}

We assume that we work on $(M^n, g)$, a pseudo-Riemannian spin manifold. It
is convenient to denote by $\{ X_a \}$ a local orthonormal basis for
$TM$ and by $\{ e^a \}$ the dual basis for $T^* M$. We also define
\begin{equation}
\eta_{ab} = g(X_a, X_b), \qquad \eta^{ab} =
\left(\eta^{-1}\right)^{ab}, \qquad X^a = \eta^{ab} X_b, \qquad e_a =
\eta_{ab} e^b.
\end{equation}
The matrix with entries $\eta_{ab}$ will of course be diagonal, with
entries $\pm 1$. Throughout we sum over repeated indices.

\subsection{The Clifford algebra} \label{Cliffsec}

We will work with the conventions of Benn and Tucker \cite{BennT}, to
whom we refer the reader for more details of this formalism. We
recall that differential forms may be identified with elements of the
Clifford algebra. Denoting the Clifford product by juxtaposition, our
convention\footnote{note that in the mathematical literature, a different
  convention with the signs of the interior derivatives reversed is
  often used} is that for a $1$-form $\alpha$ and \ $p$-form $\omega$
\begin{eqnarray}
\alpha \omega &=& \alpha \wedge \omega + \alpha^\sharp \hook \omega\,,
\nonumber \\
\omega \alpha &=& (-1)^p \left(\alpha \wedge \omega - \alpha^\sharp
  \hook \omega \right)\,. \label{Cliff}
\end{eqnarray}
Repeated application of this rule allows us to construct the Clifford
product between forms of arbitrary degree. Appendix
\ref{cliffprods}  contains, for convenience, expressions for the products and
some (anti)-commutators. In order to state these products compactly, we introduce the
contracted wedge product, defined inductively by
\begin{equation}
	\alpha \cwedge{0} \beta = \alpha \wedge \beta, \qquad \alpha \cwedge{n} \beta = \left(X_a \hook \alpha\right)\mathop{\wedge}_{{}^{n-1}} \left( X^a \hook \beta \right).
\end{equation}

For those more familiar with the Clifford algebra in terms of
$\gamma$-matrices, we note that the identification between
differential forms and elements of the Clifford algebra can be
expressed as
\begin{equation}
\frac{1}{p!}\omega_{a_1 \ldots a_p} e^{a_1} \wedge \cdots \wedge
e^{a_p} \longrightarrow \frac{1}{p!}\omega_{a_1 \ldots a_p}
\gamma^{[a_1} \cdots \gamma^{a_p]}. \label{gammacor}
\end{equation}
The relations (\ref{Cliff}) are equivalent to the usual
anti-commutator for the $\gamma$-matrices:
\begin{equation}
\gamma^a \gamma^b + \gamma^b \gamma^a = 2 \eta^{ab}.
\end{equation}
We will use the short-hand
$e^{a_1 \ldots a_p} = e^{[a_1} \cdots e^{a_p]} = e^{a_1} \wedge \cdots \wedge
e^{a_p}.$

\subsection{The connection with torsion}

We wish to consider a connection with totally anti-symmetric torsion
$T \in \Omega^3 (M)$ which is defined, for arbitrary vectors $X$ and $Y$, in terms of the Levi-Civita
connection $\nabla$ to be
\begin{equation}
\nabla^T_X Y = \nabla_X Y + \frac{1}{2} T(X, Y, X_a) X^a\,. \label{conn}
\end{equation}
Acting on a form, we find that
\begin{equation}
\nabla^T_X \omega = \nabla_X \omega + \frac{1}{2} (X \hook T)
\cwedge{1} \omega\,.
\end{equation}
This connection is metric, so obeys
$X(g(Y, Z)) = g(\nabla^T_X Y, Z) + g(Y, \nabla^T_X Z)$
and in addition has the same geodesics as the Levi-Civita connection
since
$\nabla^T_X X = \nabla_X X\,.$

It is useful to define two operators on forms related to the exterior
derivative and its adjoint as
\begin{eqnarray}
d^T \omega &=& e^a \wedge \nabla^T_{X_a} \omega = d \omega - T
\cwedge{1} \omega\,, \nonumber \\
\delta^T \omega &=& - X^a \hook \nabla^T_{X_a} = \delta \omega -
\frac{1}{2}  T \cwedge{2} \omega. \label{extder}
\end{eqnarray}
These respectively raise and lower the degree of a form by one. Unlike
the standard $d, \delta$ they are not nilpotent in general.

In calculations it is often convenient to consider a basis which is
parallel at a point $p$. For a connection with non-zero torsion, such
a basis does not necessarily exist. It is always possible to arrange that at $p$:
\begin{eqnarray}
\nabla_{X_a}^T X_b &=& \frac{1}{2}T(X_a, X_b, X_c) X^c\,, \nonumber\\
\nabla_{X_a}^T e^b &=& \frac{1}{2}T(X_a, X^b, X_c) e^c\,, \nonumber\\
 \com{X_a}{ X_b} &=& 0\,. \label{basiscond}
\end{eqnarray}
We will assume from now on that we work with such a basis.

\subsubsection{Curvature}

The curvature operator is defined as usual by
\begin{equation}
R(X,Y) \omega = \left( \nabla^T_X \nabla^T_Y -\nabla^T_Y \nabla^T_X -
  \nabla^T_{\left[ X, Y \right]}  \right) \omega.
\end{equation}
In Appendix \ref{bianchisec} we prove versions of the first Bianchi
identity for the curvature operator. 

For any pair of vectors $X, Y$, we
introduce the following $2$-form, following \cite{BennT}
\begin{equation}
\mathcal{R}_{X,Y} = \frac{1}{4} ( X_a \hook R(X, Y) e_b)\, e^{ab} =
-\frac{1}{4} X \hook Y \hook R_{ab} e^{ab}.
\end{equation}
Here $R_{ab}$ are the usual curvature $2$-forms. The reason for
introducing this $2$-form is as follows. Using the results of Appendix
\ref{cliffprods} it can be readily demonstrated that
$\left[\mathcal{R}_{X,Y}, e^a \right] = R(X, Y) e^a,$
where the commutator is taken in the Clifford sense. One may extend
this result by linearity to show that for any form $\omega$, the
curvature operator is related to the $2$-form $\mathcal{R}_{X,Y}$ by
\begin{equation}
\left[\mathcal{R}_{X,Y}, \omega \right] = R(X, Y) \omega.
\end{equation}
In Appendix \ref{bianchisec} we establish the following Bianchi identity for
$\mathcal{R}_{X,Y}$:
\begin{equation}
e^{ab} \mathcal{R}_{X_a,X_b} = -\frac{3}{2} d^TT - T \cwedge{1} T +
\delta^T T - \frac{1}{2} s\,. \label{bianchi2}
\end{equation}
Here $s$ is the scalar curvature of the connection with torsion, which
we take to be defined by
$s = - X^a \hook R(X_a, X_b) e^b$.

\subsubsection{The lift to the spinor bundle}

We will consider the natural lift of the connection (\ref{conn}) to
the spinor bundle, given by
\begin{equation}
\nabla^T_X \psi = \nabla_X \psi - \frac{1}{4} X \hook T \psi,
\end{equation}
where $\psi$ is a spinor field (i.e.\ a section of the spinor bundle)
and $\nabla_X$ is the usual lift of the Levi-Civita connection. This
connection is a derivation over the Clifford product between forms and
spinors, 
\begin{equation}
\nabla^T_X(\omega \psi) = (\nabla^T_X \omega) \psi + \omega \nabla^T_X \psi,
\end{equation}
for any form $\omega$ and spinor field $\psi$. It is shown in
\cite{BennT} that the curvature operator acting on a spinor is given
simply by
\begin{equation}
R(X,Y) \psi = \left( \nabla^T_X \nabla^T_Y -\nabla^T_Y \nabla^T_X -
  \nabla^T_{\left[ X, Y \right]}  \right) \psi = \mathcal{R}_{X,Y} \psi\,.
\end{equation}

\subsection{The Dirac operator}

The `na\"ive' or `bare' Dirac operator defined by the spinor
covariant derivative given above is
\begin{equation}
D^T = e^a \nabla^T_{X_a} = D - \frac{3}{4} T,
\end{equation}
where $D$ is the Dirac operator of the Levi-Civita connection. For the
purposes of the physics of spinning particles, among other reasons, it
is convenient to consider a modified operator of the form $D^T +
\alpha T$ which satisfies a Schr\"odinger-Lichnerowicz type
formula \cite{Agricola:2004}. 

Making use of the Clifford algebra, together with (\ref{bianchi2}) one may show that
\begin{equation}
(D^T)^2  = -X^a \hook T \nabla^T_{X_a} - \Delta^T + \frac{1}{2}
e^{ab}\mathcal{R}_{X_a, X_b},
\end{equation}
where we have introduced the spinor Laplacian
\begin{equation}
\Delta^T = - \nabla^T_{X_a} \nabla^T_{X^a} + \nabla^T_{\nabla^T_{X_a}
  X^a}.
\end{equation}
This suggests that a field obeying the massless Dirac equation will not
obey the natural spinor Klein-Gordon equation associated with the
connection $\nabla^T$ as there is an additional term linear in
derivatives of the field. In order to remove this unwanted term, we
note that
\begin{equation}
T D^T + D^T T = 2 X_{a}\hook T \nabla^T_{X_a} + d^TT - \delta^T T,
\end{equation}
so that
\begin{eqnarray}
-(D^T+ \alpha T)^2 &=& \Delta^T + (1-2\alpha) X_{a}\hook T
\nabla^T_{X_a} + (\alpha-\frac{1}{2})\delta^T T +
(\frac{3}{4}-\alpha)d^T T \nonumber \\ && + (\frac{1}{2}-\alpha^2)
T\cwedge{1} T + \frac{\alpha^2}{6}  \norm{T}^2 + \frac{s}{4}, \label{SchrLich}
\end{eqnarray}
where we have made use of the results of Appendix \ref{cliffprods} to
express the Clifford product of $T$ with itself. 

We may think of (\ref{SchrLich})
as a formula relating the square of a Dirac operator with torsion
$(1-4 \alpha/3) T$  to derivatives involving the connection $T$. Clearly the term
linear in derivatives will vanish if we take $\alpha = 1/2$. It is
therefore natural when we have a connection with torsion $T$ to
consider the modified Dirac operator \cite{Bismut:1989, Agricola:2006tx, Kassuba}
\begin{equation}
\mD \equiv D^T + \frac{T}{2}= D^{T/3}\,.
\end{equation}
In terms of the Levi-Civita connection and with the help of correspondence \eq{gammacor} this reads  
\be\n{Dirac}
\mD=D-\frac{1}{4}T=\gamma^a\nabla_a-\frac{1}{24}T_{abc}\gamma^{abc}\,.
\ee
Such an operator satisfies the elegant relation
\begin{equation}
\mD^2 =-\Delta^T - \frac{dT}{4} - \frac{s}{4} - \frac{\norm{T}^2}{24}\,.
\end{equation}
Up to quantum corrections this gives the Hamiltonian considered at the
spinning particle level in \cite{Rietdijk:1995ye, DeJonghe:1996fb, Kubiznak:2009qi}. It will be convenient for us to have the alternative form
\be
\mD^2=-\Delta^T+\frac{1}{2}e^{ab}{\cal R}_{X_a X_b}+\frac{1}{2}(\dt T-\delt T)+\frac{1}{4} T^2\,. \label{dirsq}
\ee

\section{Generalized conformal Killing--Yano tensors \label{CCKYsec}}

A $p$-form $\omega$ is called a {\em generalized conformal
  Killing-Yano} (GCKY) tensor \cite{Yano:1953, Kubiznak:2009qi} if it obeys the conformal Killing--Yano equation with torsion:
\beq\label{GCKY}
\nabla^T_X \omega-\frac{1}{p+1}{X}\hook {d}^T \omega+\frac{1}{n-p+1}
{X}^\flat \wedge {\delta}^T \omega=0
\eeq
for any $X\in TM$. If $\delta^T\omega=0$, $\omega$ is called a generalized Killing--Yano
(GKY) tensor, while if $d^T\omega=0$ it is a generalized closed conformal Killing--Yano  (GCCKY) tensor. Eq. \eq{GCKY} is conformally invariant and also invariant under $\omega\to*\omega$.  The Hodge dual interchanges GKY and GCCKY tensors.

A key property of GCCKY tensors is that they form a (graded) algebra under the
wedge product \cite{Kubiznak:2009qi, KrKuPaFr}. We summarize this in a lemma
\begin{lemma} \label{gcckylem}
If $\omega_1$ and $\omega_2$ are GCCKY tensors then so is $\omega_1 \wedge
\omega_2$. Further, the operator
\be
\delta^T \circ (n-\pi+1)^{-1},
\ee
where $\pi$ is the linear map taking a $p$-form $\alpha$ to $p \alpha$, is a graded
derivation on the exterior algebra of GCCKY tensors.
\end{lemma}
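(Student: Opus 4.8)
The plan is to recast the GCCKY condition in a form that makes both claims almost immediate. For a $p$-form $\omega$ with $d^T\omega=0$, equation \eq{GCKY} reduces to
\be
\nabla^T_X\omega=-X^\flat\wedge\xi_\omega\,,\qquad \xi_\omega:=\frac{1}{n-p+1}\delta^T\omega\,,
\ee
for every $X\in TM$. Conversely, I claim that if $\nabla^T_X\omega=-X^\flat\wedge\xi$ for all $X$ with $\xi$ some fixed $(p-1)$-form, then $\omega$ is automatically GCCKY with $\xi=\xi_\omega$. This follows by plugging the relation into the definitions \eq{extder}: using $e^a\wedge e_a=0$ one gets $d^T\omega=e^a\wedge\nabla^T_{X_a}\omega=-e^a\wedge e_a\wedge\xi=0$, and using the Leibniz rule for $X^a\hook$ together with the identity $\sum_a e_a\wedge(X^a\hook\xi)=(\deg\xi)\,\xi$ one gets $\delta^T\omega=-X^a\hook\nabla^T_{X_a}\omega=n\xi-(p-1)\xi=(n-p+1)\xi$, so that $\xi=\xi_\omega$; feeding this back into \eq{GCKY} shows the conformal Killing--Yano equation holds. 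Thus for a closed form ``being GCCKY'' is precisely the statement that $\nabla^T_X\omega$ factors as $X^\flat\wedge(\text{something independent of }X)$.

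Given this, let $\omega_1,\omega_2$ be GCCKY of degrees $p_1,p_2$. Since $\nabla^T$ is a metric connection it acts as a derivation on the exterior algebra of forms, so
\be
\nabla^T_X(\omega_1\wedge\omega_2)=(\nabla^T_X\omega_1)\wedge\omega_2+\omega_1\wedge(\nabla^T_X\omega_2)=-(X^\flat\wedge\xi_{\omega_1})\wedge\omega_2-\omega_1\wedge(X^\flat\wedge\xi_{\omega_2})\,.
\ee
Commuting $X^\flat$ to the front of the second term (which costs a sign $(-1)^{p_1}$) gives
\be
\nabla^T_X(\omega_1\wedge\omega_2)=-X^\flat\wedge\Big(\xi_{\omega_1}\wedge\omega_2+(-1)^{p_1}\omega_1\wedge\xi_{\omega_2}\Big)\,,
\ee
and since the bracket does not depend on $X$, the converse just established shows that $\omega_1\wedge\omega_2$ is GCCKY (in particular $d^T(\omega_1\wedge\omega_2)=0$). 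This is the first assertion.

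Finally, the converse also pins down the bracket: it must equal $\xi_{\omega_1\wedge\omega_2}=\frac{1}{n-(p_1+p_2)+1}\delta^T(\omega_1\wedge\omega_2)$. Writing $\mathcal{E}=\delta^T\circ(n-\pi+1)^{-1}$, so that $\mathcal{E}\alpha=\xi_\alpha$ for every form $\alpha$, the previous display reads
\be
\mathcal{E}(\omega_1\wedge\omega_2)=(\mathcal{E}\omega_1)\wedge\omega_2+(-1)^{p_1}\omega_1\wedge(\mathcal{E}\omega_2)\,,
\ee
which is exactly the graded Leibniz rule for the degree $-1$ operator $\mathcal{E}$ on the exterior algebra of GCCKY tensors. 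The only point requiring genuine care is the converse in the first paragraph: one must track the sign in the interior-product Leibniz rule and the coefficient $n-\deg\xi$ correctly in order to recover the normalization $\xi=\xi_\omega$. Everything else is the derivation property of $\nabla^T$ and a single sign from moving $X^\flat$ past a $p_1$-form; no curvature input is needed, in contrast with the proof of Theorem \ref{mainthm}.
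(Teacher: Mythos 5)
Your proof is correct and follows essentially the same route as the paper: both rest on the characterization that $\omega$ is GCCKY precisely when $\nabla^T_X\omega$ factors as $X^\flat$ wedged with an $X$-independent form (necessarily $-(n-p+1)^{-1}\delta^T\omega$), after which the derivation property of $\nabla^T$ over the wedge product and the sign $(-1)^{p_1}$ give both claims at once. The only difference is that you spell out the converse direction of that characterization, which the paper merely asserts.
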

\begin{proof}
First note that $\omega\in \Omega^p M$ is GCCKY if and only if there exists an
$\tilde{\omega}$ such that $\nt{} \omega = X^\flat \wedge \tilde{\omega}$
and if this holds, $\tilde{\omega} = -(n-p+1)^{-1}
\delta^T\omega$. Suppose  $\omega_i$ are GCCKY forms of
degree $p_i$. We calculate
\begin{equation*}
\nt{}(\omega_1 \wedge \omega_2) = \nt{}\omega_1 \wedge
\omega_2+\omega_1 \wedge \nt{}\omega_2 = -X^\flat \wedge
\left(\frac{\delta^T \omega_1\wedge \omega_2}{n-p_1+1} + (-1)^{p_1} \frac{\omega_1\wedge\delta^T\omega_2}{n-p_2+1} \right).
\end{equation*}
\end{proof}

\subsection{Integrability conditions}
We will require integrability conditions for the GCKY equation. By
differentiating \eq{GCKY} we find the following conditions,
generalizing results of Semmelmann \cite{Semmelmann:2002}.
\ba
\Delta^T \omega &=&
\frac{1}{p+1}\delt\dt\omega+\frac{1}{n-p+1}\dt\delt\omega\,, \\
e^b\wedge X^a\hook
R(X_a,X_b)\omega&=&\frac{p}{p+1}\delt\dt\omega+\frac{n-p}{n-p+1}\dt\delt\omega+(X_b\hook
T)\cwedge{1}\nt{b}\omega\, . \label{intcond2}
\ea
Here $\Delta^T = -\nt{a}\nabla^T_{X^a}+\nabla^T_{\nt{a} X^a}$ is the Bochner Laplacian derived from the connection
$\nabla^T$. These are consistent with the Weitzenb\"ock identity in the
presence of torsion:
\be
\delt\dt+\dt\delt = \Delta^T+e^b\wedge X^a\hook
R(X_a,X_b)-(X_b\hook T)\cwedge{1}\nt{b}\, .
\ee
The condition \eq{intcond2} was already known to Yano and Bochner \cite{Yano:1953} in the case where $\delta^T \omega$ vanishes.

At this stage it will be convenient to deduce an identity relating the curvature form  
${\cal R}_{XY}$ to the integrability condition. 
We consider
\ba
\frac{1}{2}[e^{ab},\omega]{\cal R}_{X_a X_b}\!\!\!&-&\!\!\!\Bigl(\frac{p}{p+1}\delt\dt\omega +\frac{n-p}{n-p+1}\dt\delt\omega\Bigr)\nonumber\\
&=&
\frac{1}{2}[e^{ab},\omega]{\cal R}_{X_a X_b}-e^b\wedge X^a\hook R(X_a,X_b)\omega+(X_b\hook T)\cwedge{1}\nt{b}\omega\nonumber\\
&=&
\frac{1}{2}[e^{ab},\omega]{\cal R}_{X_a X_b}+\frac{1}{2}e^{ab}[{\cal R}_{X_a X_b},\omega]-\frac{1}{2} e^{ab}\wedge R(X_a,X_b)\omega\nonumber\\
&&\quad-\frac{1}{2}X^a\hook X^b\hook R(X_a,X_b)\omega+
(X_b\hook T)\cwedge{1}\nt{b}\omega\nonumber\\
&=&\frac{1}{2}[e^{ab}{\cal R}_{X_a X_b},\omega]-\dt\dt\omega-(X^a\hook T)\wedge \nt{a}\omega-\delt\delt\omega+\frac{1}{2}(X^a\hook T)\cwedge{2}\nt{a}\omega\nonumber\\
&&\quad+(X_b\hook T)\cwedge{1}\nt{b}\omega\,.\nonumber
\ea
Here we have made liberal use of the results of appendices \ref{cliffprods} and \ref{bianchisec}. Putting this together with \eq{bianchi2} we find 
\ba
\frac{1}{2}[e^{ab},\omega]{\cal R}_{X_a X_b}\!\!\!&-&\!\!\!\Bigl(\frac{p}{p+1}\delt\dt\omega +\frac{n-p}{n-p+1}\dt\delt\omega\Bigr)\nonumber\\
&=& \frac{1}{2}\Bigl[-\frac{3}{2}\dt T-T\cwedge{1}T+\delt T, \omega\Bigr]-\dt\dt\omega-\delt\delt\omega-(X^a\hook T)\nt{a}\omega\,. \label{intcond}
\ea

\section{Symmetry operators \label{proof}}
\subsection{Massless Dirac}
We are now ready to construct a symmetry operator for the Dirac operator. Our goal is to show that (up to anomaly terms) the operator
\be
L_\omega=e^a\omega\nt{a}+\frac{p}{p+1}\dt \omega -\frac{n-p}{n-p+1}\delt\omega+\frac{1}{2}T\omega \label{Ldef}
\ee 
satisfies
\be
\mD L_\omega=\omega \mD^2+\frac{(-1)^p}{p+1}\dt \omega \mD+\frac{(-1)^p}{n-p+1}\delt\omega\mD,
\ee
provided $\omega$ is a GCKY form and to calculate the anomaly terms. This means that when the anomaly terms (given explicitly by \eq{anomaly} below) vanish, the operator $L_\omega$ gives an {\em on-shell} ($\mD\psi=0$) symmetry operator for $\mD$. 

In the rest of this section, we shall sketch the proof of these assertions. It essentially consists of commuting $\mD$ through the operator $L_\omega$. We calculate 
\ba
\mD e^a\omega\nt{a}\!\!\!&=&\!\!\!-\omega\Delta^T+\frac{1}{2}e^{ab}\omega{\cal R}_{X_a X_b}+2\nt{a}\omega\nt{a}-e^b(\dt \omega-\delt\omega)\nt{b}-X^b\hook T\omega\nt{b}+\frac{1}{2}Te^b\omega\nt{b}\,,\nonumber\\
\mD \dt \omega \!\!\!&=&\!\!\!(\dt \dt \omega-\delt\dt\omega)+e^a\dt\omega\nt{a}+\frac{1}{2}T\dt \omega\,,\nonumber\\
\mD \delt \omega \!\!\!&=&\!\!\!(\dt\delt\omega-\delt\delt\omega)+e^a\delt\omega\nt{a}+\frac{1}{2}T\delt \omega\,,\nonumber\\
\mD T \omega \!\!\!&=&\!\!\! (\dt T-\delt T)\omega+e^a T\nt{a}\omega+e^aT\omega\nt{a}+\frac{1}{2}T^2\omega\,,
\ea
We need to simplify the sum of these four terms (with appropriate coefficients as given by \eq{Ldef}). Let us first gather the terms with one derivative acting on a spinor
\ba
\Bigl(2\nt{a}\omega\!\!&-&\!\!\frac{1}{p+1}e^a\dt \omega+\frac{1}{n-p+1}e^a\delt\omega\Bigr)\nt{a}+\bigl(-X^b\hook T+\frac{1}{2}T e^b+\frac{1}{2}e^bT\bigr)\omega \nt{b}\nonumber\\
\!\!&=&\!\!2\Bigl(\nt{a}\omega -\frac{X^a\hook \dt \omega}{p+1}+\frac{e^a\wedge\delt\omega}{n-p+1}\Bigr)\nt{a}+\frac{(-1)^p}{p+1} \dt \omega e^a\nt{a}+\frac{(-1)^p}{n-p+1}\delt\omega e^a\nt{a}\nonumber\\
\!\!&=&\!\!\frac{(-1)^p}{p+1}\dt \omega \mD+\frac{(-1)^p}{n-p+1}\delt\omega\mD-\frac{(-1)^p}{2(p+1)}\dt \omega T-\frac{(-1)^p}{2(n-p+1)}\delt \omega T\,,
\ea 
where in order to remove differential terms not proportional to $\mD$ we impose the GCKY equation \eq{GCKY}. This condition on $\omega$ arises also at the spinning particle level \cite{Kubiznak:2009qi}.
So we have
\ba
\mD L_\omega\!\!&=&\!\!
\Bigl(2\nt{a}\omega-\frac{1}{p+1}e^a\dt \omega+\frac{1}{n-p+1}e^a\delt\omega\Bigr)\nt{a}+\bigl(-X^b\hook T+\frac{1}{2}T e^b+\frac{1}{2}e^bT\bigr)\omega \nt{b}\nonumber\\
\!\!&&\!\!\quad-\omega\Delta^T+\frac{1}{2}e^{ab}\omega{\cal R}_{X_a X_b}+\frac{1}{2}(\dt T-\delt T)\omega+\frac{1}{4}T^2\omega\nonumber\\
\!\!&&\!\!\quad+\frac{1}{2}e^aT\nt{a}\omega+\frac{p}{2(p+1)}T\dt \omega-\frac{n-p}{2(n-p+1)}T\delt \omega\nonumber\\
\!\!&&\!\!\quad+\frac{p}{p+1}\dt \dt \omega+\frac{n-p}{n-p+1}\delt\delt\omega-
\Bigl(\frac{p}{p+1}\delt\dt\omega+\frac{n-p}{n-p+1}\dt\delt\omega\Bigr)\nonumber\\
\!\!&=&\!\!\frac{(-1)^p}{p+1}\dt \omega \mD+\frac{(-1)^p}{n-p+1}\delt\omega\mD-\frac{(-1)^p}{2(p+1)}\dt \omega T-\frac{(-1)^p}{2(n-p+1)}\delt \omega T\nonumber\\
\!\!&&\!\!\quad+\omega\mD^2+\frac{1}{2}[e^{ab},\omega] {\cal R}_{X_a X_b}+\frac{1}{2}\bigl[\dt T-\delt T+\frac{1}{2}T^2,\omega\bigr]\nonumber\\
\!\!&&\!\!\quad+X^a\hook T\nt{a}\omega-\frac{T\dt \omega}{2(p+1)}+\frac{T\delt \omega}{2(n-p+1)}\nonumber\\
\!\!&&\!\!\quad+\frac{p}{p+1}\dt \dt \omega+\frac{n-p}{n-p+1}\delt\delt\omega-
\Bigl(\frac{p}{p+1}\delt\dt\omega+\frac{n-p}{n-p+1}\dt\delt\omega\Bigr)\,,\nonumber
\ea
where we have used \eq{dirsq}. Consolidating, we obtain 
\be
\mD L_\omega=\omega\mD^2+\frac{(-1)^p}{p+1}\dt \omega \mD +\frac{(-1)^p}{n-p+1}\delt\omega\mD-A\,,
\ee
where we define $A$ to be the anomaly, i.e., the obstruction to $L_\omega$ giving a symmetry operator of the massless Dirac equation. This completes the proof of Theorem \ref{mainthm}.

In order to find a compact expression for the anomaly $A$, we make use of the integrability condition \eq{intcond} to find
\be\label{anomaly2}
A =
\frac{1}{2}\left(\frac{[T,\dt \omega]_p}{p+1}-\frac{[T,\delt \omega]_{p+1}}{n-p+1}\right)
+\frac{1}{4}[ dT,\omega]+\left(\frac{\dt \dt
    \omega}{p+1}+\frac{\delt\delt\omega}{n-p+1}\right)\, ,
\ee 
using the bracket notation of appendix \ref{cliffprods}. Expanding the commutators gives
\be
A=\frac{1}{p+1}\Bigl[d(\dt \omega)-\frac{1}{6}T\cwedge{3}\dt \omega\Bigr]+\frac{\delta(\delt \omega)-T\wedge\delt\omega}{n-p+1}-\frac{1}{2}dT\cwedge{1}\omega+\frac{1}{12}dT\cwedge{3}\omega\,. \label{anomaly}
\ee
The anomaly $A$ splits as $A^{(cl)}$ and $A^{(q)}$ a $(p+2)$-form and a $(p-2)$-form, respectively, so that $A=A^{(cl)}+A^{(q)}$. In order that $A$ vanishes, both parts must vanish separately. We have:
\ba
A^{(cl)}\!\!&=&\!\!\frac{d(\dt \omega)}{p+1}-\frac{T\wedge \delt \omega}{n-p+1}-\frac{1}{2}
dT\cwedge{1}\omega\,,\\
A^{(q)}\!\!&=&\!\!\frac{\delta(\delt \omega)}{n-p+1}-\frac{1}{6(p+1)}T\cwedge{3}\dt \omega+\frac{1}{12}dT\cwedge{3}\omega\,.
\ea
We shall sometimes refer to $A^{(cl)}$ as the `classical' anomaly, as
this anomaly shows up even in the semi-classical spinning particle
limit.

For convenience, we state some alternative forms of the operator $L_\omega$. Re-writing in terms of the Levi-Civita connection, 
$\nabla_{X_a}=\nt{a}+\frac{1}{4}X_a\hook T$, we find
\ba\label{Ldef2}
L_\omega\!\!&=&\!\!(X^a\hook \omega+e^a\wedge \omega)\nabla_{X_a}+\frac{p}{p+1} d\omega -\frac{n-p}{n-p+1}\delta\omega\nonumber\\
&&-\frac{1}{4}T\wedge \omega +\frac{3-p}{4(p+1)}T\cwedge{1}\omega+\frac{n-p-3}{8(n-p+1)}T\cwedge{2}\omega+\frac{1}{24}T\cwedge{3}\omega\,.
\ea 
Using further the correspondence \eq{gammacor}, one can rewrite this operator in `gamma-matrix notations'. Defining 
$\tilde L_\omega=(p-1)!L_\omega$, we find 
\ba
\tilde L_\omega\!\!&=&\!\!\Bigl[\omega^a_{\ b_1\dots b_{p-1}}\gamma^{b_1\dots b_{p-1}}+\frac{1}{p(p+1)}\omega_{b_1\dots b_p}\gamma^{ab_1\dots b_p}\Bigr]\nabla_a
+\frac{1}{(p+1)^2}(d\omega)_{b_1\dots b_{p+1}}\gamma^{b_1\dots b_{p+1}}\nonumber\\
&&-\frac{n-p}{n-p+1}(\delta\omega)_{b_1\dots b_{p-1}}\gamma^{b_1\dots b_{p-1}}
-\frac{1}{24}T_{b_1 b_2 b_3} \omega_{b_4\dots b_{p+3}}\gamma^{b_1\dots b_{p+3}}
+\frac{3-p}{8(p+1)}T^a_{\ b_1b_2}\omega_{a b_3\dots b_{p+1}}\gamma^{b_1\dots b_{p+1}}\nonumber\\
&&+\frac{(n-p-3)(p-1)}{8(n-p+1)}T^{ab}_{\ \ b_1}\omega_{ab b_2\dots b_{p-1}}\gamma^{b_1\dots b_{p-1}}
+\frac{(p-1)(p-2)}{24} T^{abc}\omega_{abc b_1\dots b_{p-3}}\gamma^{b_1\dots b_{p-3}}\,.\qquad
\ea

\subsection{(Anti)-Commuting operators for massive Dirac}
We now deduce Corollaries \ref{GKYcor} and \ref{GCCKYcor}. Suppose now that the anomaly $A$ vanishes, then
\be
\mD L_\omega=\omega\mD^2+\frac{(-1)^p}{p+1}\dt \omega \mD +\frac{(-1)^p}{n-p+1}\delt\omega\mD\,.
\ee
Let us further define 
\ba
K_\omega\!\!&\equiv&\!\! L_\omega-(-1)^p\omega\mD=2X^a\hook \omega\nt{a}+\frac{p}{p+1}\dt \omega-\frac{n-p}{n-p+1}\delt \omega +T\cwedge{1}\omega-\frac{1}{6}T\cwedge{3}\omega\,,\qquad\\
M_\omega\!\!&\equiv&\!\! L_\omega+(-1)^p\omega\mD=2e^a\wedge \omega\nt{a}+\frac{p}{p+1}\dt \omega-\frac{n-p}{n-p+1}\delt \omega +T\wedge\omega-\frac{1}{2}T\cwedge{2}\omega\,.
\ea
Then, by using 
\ba
\mD\omega\!\!&=&\!\!\Bigl(e^a\nt{a}+\frac{1}{2}T\Bigr)\omega=e^a\omega+\dt \omega-\delt \omega+\frac{1}{2}T\omega\nonumber\\
\!\!&=&\!\!(-1)^p\omega\mD+2X^a\hook\omega\nt{a}+\dt \omega-\delt\omega+\frac{1}{2}[T,\omega]_{p+1}\nonumber\\
\!\!&=&\!\!-(-1)^p\omega\mD+2e^a\wedge\omega\nt{a}+\dt \omega-\delt\omega+\frac{1}{2}[T,\omega]_{p}\,,\nonumber
\ea
we find that these new operators satisfy
\be
[\mD, K_\omega]_{p}=\frac{2(-1)^p}{n-p+1}\delt\omega\mD\,,\qquad 
[\mD, M_\omega]_{p+1}=\frac{2(-1)^p}{p+1}\dt\omega\mD\,.
\ee
We note that more generally, if the anomaly $A$ does not vanish, then it will appear as the right-hand-side of both of these commutators.
Let us now consider the cases when $\omega$ is a GKY and GCCKY separately. 

\subsubsection{GKY and corresponding operator}
When $\omega$ is a GKY tensor ($\delt\omega=0$), the conditions for vanishing of the anomalies $A^{(cl)}$ and $A^{(q)}$ reduce to 
\be\label{an1}
d(T\cwedge{1}\omega)+\frac{p+1}{2}\,dT\cwedge{1}\omega=0\,,\qquad 
dT\cwedge{3}\omega-\frac{2}{p+1}T\cwedge{3}\dt\omega=0\,,
\ee
and the operator $K_\omega$, obeying $[\mD, K_\omega]_{p}=0$ becomes 
\be
K_\omega=2X^a\hook \omega\nt{a}+\frac{p}{p+1}\dt \omega +T\cwedge{1}\omega-\frac{1}{6}T\cwedge{3}\omega\,.
\ee
The first condition \eq{an1} is present already at the spinning particle level 
(see, e.g., \cite{DeJonghe:1996fb} where it is derived for a GKY 2-form).
It may be useful to rewrite $K_\omega$ in terms of the Levi-Civita connection. A calculation in the Clifford algebra gives
\be
K_\omega=2X^a\hook \omega\nabla_{X_a}+\frac{p}{p+1}d\omega +\frac{1-p}{2(p+1)}T\cwedge{1}\omega-\frac{1}{2}T\cwedge{2}\omega+\frac{1}{12}T\cwedge{3}\omega\,.
\ee
Introducing $\tilde K_\omega\equiv K_\omega (p-1)!/2$, we find 
\ba
\tilde K_\omega&=&\omega^a_{\ \,b_1\dots b_{p-1}}\gamma^{b_1\dots b_{p-1}}\nabla_a+\frac{1}{2(p+1)^2}(d\omega)_{b_1\dots b_{p+1}}\gamma^{b_1\dots b_{p+1}}\nonumber\\
&&+ \frac{1-p}{8(p+1)}T^a_{\ \,b_1b_2}\omega_{ab_3\dots b_{p+1}}\gamma^{b_1\dots b_{p+1}}
-\frac{p-1}{4}T^{ab}_{\ \ \,b_1}\omega_{ab b_2\dots b_{p-1}}\gamma^{b_1\dots b_{p-1}}\nonumber\\
&&+ \frac{(p-1)(p-2)}{24}T^{abc}\omega_{abc b_1\dots b_{p-3}}\gamma^{b_1\dots b_{p-3}}\,.
\ea 
The first two terms correspond to symmetry operator discussed in \cite{Benn:1996ia, Cariglia:2003kf} in the absence of torsion.
The third term is a `leading' torsion correction, present already at the spinning particle level \cite{Kubiznak:2009qi}. The last two terms are `quantum corrections' due to the presence of torsion.

\subsubsection{GCCKY and corresponding operator}

When $\omega$ is a GCCKY tensor ($\dt\omega=0$), the conditions for vanishing of the anomalies $A^{(cl)}$ and $A^{(q)}$ reduce to  
\be\label{an2}
A^{(cl)}=-\frac{T\wedge\delt\omega}{n-p+1}-\frac{1}{2}dT\cwedge{1}\omega=0\,,\qquad 
A^{(q)}=\frac{-1}{2(n-p+1)}\delta(T\cwedge{2}\omega)+\frac{1}{12}dT\cwedge{3}\omega=0\,,
\ee
and the operator $M_\omega$, obeying $[\mD, M_\omega]_{p+1}=0$ becomes 
\be
M_\omega=2e^a\wedge \omega\nt{a}-\frac{n-p}{n-p+1}\delt \omega +T\wedge\omega-\frac{1}{2}T\cwedge{2}\omega\,.
\ee
Again, we may express this operator in terms of the Levi-Civita connection as
\be
M_\omega=2e^a\wedge \omega\nabla_{X_a}-\frac{n-p}{n-p+1}\delta\omega -\frac{1}{2}T\wedge\omega+T\cwedge{1}\omega+\frac{n-p-1}{4(n-p+1)}T\cwedge{2}\omega\,,
\ee
and introducing $\tilde M_\omega\equiv M_\omega p!/2$, we find 
\ba
\tilde M_\omega&=&\omega_{b_1\dots b_{p}}\gamma^{ab_1\dots b_{p}}\nabla_a-\frac{p(n-p)}{2(n-p+1)}(\delta\omega)_{b_1\dots b_{p-1}}\gamma^{b_1\dots b_{p-1}}\nonumber\\
&&- \frac{1}{24}T_{b_1b_2 b_3}\omega_{b_4\dots b_{p+3}}\gamma^{b_1\dots b_{p+3}}
+\frac{p}{4}T^{a}_{\ \,b_1 b_2}\omega_{ab_3\dots b_{p+1}}\gamma^{b_1\dots b_{p+1}}\nonumber\\
&&+ \frac{p(p-1)(n-p-1)}{8(n-p+1)}T^{ab}_{\ \ \,b_1}\omega_{abb_2\dots b_{p-1}}\gamma^{b_1\dots b_{p-1}}\,.
\ea 
The first two terms correspond to symmetry operator discussed by \cite{Benn:1996ia} in the absence of torsion.

We noted above that GCCKY tensors form an algebra under the wedge
product. 
Consider now $\omega=\omega_1\wedge \omega_2$, where $\omega_i$ are
GCCKY tensors. From Lemma \ref{gcckylem}
we know that \mbox{$ \delta^T \circ (n-\pi+1)^{-1}$} is a graded derivation on the
exterior algebra of GCCKY forms. Since $T$ is a $3$-form, we deduce that \mbox{$T\wedge
\delta^T\circ(n-\pi+1)^{-1}$} is a derivation. A short calculation shows
that
\be
dT\cwedge{1}(\omega_1\wedge \omega_2) = (dT\cwedge{1}\omega_1)\wedge
\omega_2+\omega_1 \wedge (dT\cwedge{1}\omega_2)
\ee
so that
\be
A^{(cl)}{(\omega_1 \wedge \omega_2)} = A^{(cl)}{(\omega_1)} \wedge \omega_2 +
\omega_1 \wedge A^{(cl)}{(\omega_2)}\,.
\ee
Hence we have derived the following lemma:
\begin{lemma}
Let $\omega_1$, $\omega_2$ are GCCKY tensors for which $A^{(cl)}{(\omega_i)}=0$. Then the classical anomaly vanishes also for a GCCKY tensor $\omega=\omega_1\wedge \omega_2$, $A^{(cl)}{(\omega_1\wedge\omega_2)}=0.$ 
\end{lemma}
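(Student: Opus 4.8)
The plan is to show that, restricted to wedge products of GCCKY forms, the classical anomaly satisfies the sign-free Leibniz rule $A^{(cl)}(\omega_1\wedge\omega_2)=A^{(cl)}(\omega_1)\wedge\omega_2+\omega_1\wedge A^{(cl)}(\omega_2)$; granting this, the statement is immediate. The first move is to put $A^{(cl)}$ into a convenient form. By Lemma~\ref{gcckylem} the product $\omega_1\wedge\omega_2$ is again a GCCKY form, of degree $p=p_1+p_2$, so $\dt(\omega_1\wedge\omega_2)=0$; hence the term $d(\dt\omega)/(p+1)$ in the classical anomaly drops out, and for \emph{any} GCCKY form one has, as in \eq{an2},
\[
A^{(cl)}(\omega)=-\frac{T\wedge\delt\omega}{n-p+1}-\frac12\,dT\cwedge{1}\omega.
\]
Thus it suffices to prove a Leibniz rule for each of the operators $T\wedge\delt\circ(n-\pi+1)^{-1}$ and $dT\cwedge{1}$ on products of GCCKY forms.

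For the first operator I would invoke Lemma~\ref{gcckylem}, which states that $\delt\circ(n-\pi+1)^{-1}$ is a graded derivation on the exterior algebra of GCCKY forms. Applying that rule to $\omega_1\wedge\omega_2$ and then wedging on the left with $T$, the sign $(-1)^{p_1}$ generated by the graded Leibniz rule is cancelled by the $(-1)^{3p_1}$ that appears when the $3$-form $T$ is commuted past $\omega_1$, leaving
\[
\frac{T\wedge\delt(\omega_1\wedge\omega_2)}{n-p+1}=\Bigl(\frac{T\wedge\delt\omega_1}{n-p_1+1}\Bigr)\wedge\omega_2+\omega_1\wedge\Bigl(\frac{T\wedge\delt\omega_2}{n-p_2+1}\Bigr).
\]
For the second operator a direct computation suffices: using $X^a\hook(\omega_1\wedge\omega_2)=(X^a\hook\omega_1)\wedge\omega_2+(-1)^{p_1}\omega_1\wedge(X^a\hook\omega_2)$ in $dT\cwedge{1}\omega=(X_a\hook dT)\wedge(X^a\hook\omega)$, and commuting the $3$-form $X_a\hook dT$ past $\omega_1$ to absorb the sign, one obtains
\[
dT\cwedge{1}(\omega_1\wedge\omega_2)=(dT\cwedge{1}\omega_1)\wedge\omega_2+\omega_1\wedge(dT\cwedge{1}\omega_2).
\]
Combining the two displays with the coefficients above yields the desired Leibniz rule for $A^{(cl)}$, so that $A^{(cl)}(\omega_1)=A^{(cl)}(\omega_2)=0$ forces $A^{(cl)}(\omega_1\wedge\omega_2)=0$.

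The only real content is the sign bookkeeping, and this is where I expect the mild difficulty. Two points need care. First, $\delt$ is not itself a Leibniz operator, so $\delt(\omega_1\wedge\omega_2)$ must not be expanded naively; one has to route through Lemma~\ref{gcckylem}, which is precisely where the GCCKY hypothesis enters and which also keeps the degree-dependent factors $(n-\pi+1)^{-1}$ consistent (note $n-p+1=n-p_1-p_2+1$). Second, the two constituent operators have opposite naive parity — $T\wedge\delt$ is built from an odd graded derivation while $dT\cwedge{1}$ is even — yet their sum behaves as a single sign-free derivation; this is forced by $T$, and hence $dT$ and each $X_a\hook dT$, having \emph{odd} degree $3$. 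With those sign computations in hand the rest is routine.
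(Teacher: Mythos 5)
Your proof is correct and follows essentially the same route as the paper: reduce $A^{(cl)}$ to the two terms $-T\wedge\delt\omega/(n-p+1)-\tfrac12 dT\cwedge{1}\omega$ using $\dt\omega=0$, apply Lemma~\ref{gcckylem} (with the odd degree of $T$ absorbing the graded sign) for the first term, and a direct contracted-wedge computation for the second, giving the sign-free Leibniz rule for $A^{(cl)}$. The sign bookkeeping you carry out is exactly the ``short calculation'' the paper leaves implicit.
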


\subsection{Examples}

In \cite{Houri:2010} we show that the charged Kerr-NUT spacetimes of
Chow \cite{Chow:2008fe} admit the towers of GCCKY and GKY forms for which the
anomaly vanish. We additionally demonstrate the separability of the
Dirac equation directly. 

Another, more general, situation where the anomaly simplifies
dramatically is the case of a K\"ahler with torsion (KT) or hyper-K\"ahler
with torsion (HKT) manifold \cite{Howe:1996kj,
  Grantcharov:1999kv}. Such manifolds have a range of applications in physics from
supersymmetric sigma models with Wess-Zumino term \cite{Gates:1984nk} to the construction of solutions in five dimensional de
Sitter supergravity \cite{Grover:2008jr}.

A Hermitian manifold possesses an integrable complex
structure $J$ and a metric $g$ compatible with $J$ so that $g(JX,
JY)=g(X,Y)$. The K\"ahler form $F$ is defined to be $F(X,Y) = g(JX,
Y)$. A result of Gauduchon \cite{Gauduchon:1997} shows that there
exists a unique connection $\nabla^T$ with skew-symmetric torsion such
that
\be
\nabla^T g = 0, \qquad \nabla^T J = 0.
\ee
A Hermitian manifold with this choice of connection is called a KT
manifold. The holonomy of such a connection lies in $U(n)$. If the
torsion vanishes, then the manifold is in fact K\"ahler.

The K\"ahler form is clearly preserved by the connection $\nabla^T$
and is therefore trivially a GCKY form. Since $d^T F = \delta^T F =
0$, the anomaly simplifies to give
\be
A=\frac{1}{4}\,[ dT,F]\,.
\ee 
Which may or may not vanish. A KT-manifold is said to be \emph{strong} if $T$ is closed. In this
case we find the anomaly vanishes and $F$ generates a symmetry of the
Dirac operator. In fact, since $F$ is both GKY and GCCKY we
have a symmetry of the massive Dirac equation.

A hyper-Hermitian manifold possesses three integrable complex
structures $I_1$, $I_2$, $I_3$ satisfying
\be
I_i I_j = -\delta_{ij}+\epsilon_{ijk}I_k
\ee
and a metric $g$ compatible with the $I_i$ so that
\be
g(X,Y)=g(I_1X, I_1 Y)=g(I_2X, I_2 Y)=g(I_3X, I_3 Y).
\ee
If there exists a connection with skew-symmetric torsion $\nabla^T$
such that 
\be
\nabla^T g = 0, \qquad \nabla^T I_1 =\nabla^T I_2 = \nabla^T I_3 =  0,
\ee
then $\nabla^T$ is a HKT-connection and $M$ is a HKT manifold. Note
that by Gauduchon's result, if $\nabla^T$ exists it is unique. The
holonomy of the connection in this case lies in $Sp(n)$. We may define the triplet of K\"ahler forms by $F_i(X,Y)=g(I_iX, Y)$. Each
K\"ahler form is again a GCKY, with anomaly 
\be
A_i=\frac{1}{4}\,[ dT,F_i]\,.
\ee 
As in the KT case, a \emph{strong} HKT structure has a closed $T$ and
therefore the K\"ahler forms generate symmetries of the Dirac operator.

\section*{Acknowledgments}
We wish to thank G.~W.~Gibbons for useful discussions. D.K. acknowledges the Herschel Smith Postdoctoral Research Fellowship
at the University of Cambridge. The work of Y.Y. is supported by the Grant-in Aid for Scientific Research
No.21244003 from Japan Ministry of Education. He is also grateful for the hospitality of
DAMTP, University of Cambridge during his stay.

\appendix

\section{Appendix}

\subsection{Clifford Algebra Relations \label{cliffprods}}
In this appendix we gather various identities for the Clifford product used in the main text.
We define the brackets as follows:
\begin{equation}
[ \alpha, \beta ]_p = \alpha \beta + (-1)^{p} \beta \alpha.
\end{equation}
and will make use of the shorthand $[ \alpha, \beta ]_\pm = \alpha
\beta \pm \beta \alpha$. 

Let $\alpha \in \Omega^q(M)$ and $\beta \in \Omega^p(M)$, $q\leq p$.
Then, the following formulae can be derived from
the defining Clifford algebra relations (\ref{Cliff}):\footnote{%
The derivation of these relations goes as follows. Let us calculate a coefficient standing by the term with $n$ contraction, $(\alpha\beta)_n$. We have
\ba
(\alpha\beta)_n\!\!&=&\!\!\alpha_{a_1\dots a_q}(e^{a_1\dots a_q}\beta)_n=
(-1)^{n(q-n)}\alpha_{a_1\dots a_n a_{n+1}\dots a_q}
(e^{a_{n+1}\dots a_qa_1\dots a_n}\beta)_n\nonumber\\
\!\!&=&\!\!\frac{(-1)^{n(q-n)}q!}{(q-n)!n!}\alpha_{a_1\dots a_n a_{n+1}\dots a_q}
e^{a_{n+1}\dots a_q}\wedge X^{a_1}\hook X^{a_2}\dots X^{a_n}\hook \beta\,.\nonumber
\ea
Using now the fact that 
\be
\alpha_{a_1\dots a_n a_{n+1}\dots a_q} e^{a_{n+1}\dots a_q}
=\frac{(q-n)!}{q!}X^{a_n}\hook X^{a_{n-1}}\dots X^{a_1}\hook \alpha=
(-1)^{[n/2]}\frac{(q-n)!}{q!}X^{a_1}\hook X^{a_2}\dots X^{a_n}\hook \alpha\,,\nonumber
\ee
we arrive at the first formula \eq{formulas}. The derivation of the second formula is analogous.
}
\be\label{formulas}
\alpha \beta=\sum_{n=0}^{q}\frac{(-1)^{n(q-n)+[n/2]}}{n!}\alpha\cwedge{n}\beta\,,\quad
\beta\alpha=(-1)^{pq}\sum_{n=0}^{q}\frac{(-1)^{n(q-n+1)+[n/2]}}{n!}\alpha\cwedge{n}\beta\,.
\ee
The corresponding brackets can be easily deduced.
In particular, in the main text we shall use the following relations for $q\leq 4$:
 
Let $\alpha \in \Omega^1(M), \beta \in \Omega^p(M)$, then
\begin{eqnarray}
\textstyle
 \alpha \beta &=& \alpha \cwedge{0} \beta + \alpha \cwedge{1} \beta, \\
 \beta \alpha &=& (-1)^p(\alpha \cwedge{0} \beta - \alpha \cwedge{1}
 \beta), \\
 {[} \alpha, \beta {]}_p &=& 2 \alpha \cwedge{0} \beta,\\
 {[} \alpha, \beta {]}_{p+1} &=& 2 \alpha \cwedge{1} \beta.
\end{eqnarray}

Let $\alpha \in \Omega^2(M), \beta \in \Omega^p(M)$, then
\begin{eqnarray}
 \alpha \beta &=& \alpha \cwedge{0} \beta - \alpha \cwedge{1} \beta -
 {\textstyle{\frac{1}{2}}} \alpha \cwedge{2} \beta, \\
 \beta \alpha &=& \alpha \cwedge{0} \beta + \alpha \cwedge{1}
 \beta -{\textstyle{\frac{1}{2}}} \alpha \cwedge{2} \beta,\\
 {[} \alpha, \beta {]}_+ &=& 2 \alpha \cwedge{0}\beta- \alpha \cwedge{2}  \beta, \\
 {[} \alpha, \beta {]}_- &=& -2 \alpha \cwedge{1} \beta.
\end{eqnarray}

Let $\alpha \in \Omega^3(M), \beta \in \Omega^p(M)$, then
\begin{eqnarray}
 \alpha \beta &=& \alpha \cwedge{0} \beta + \alpha \cwedge{1} \beta -
 {\textstyle{\frac{1}{2}}} \alpha \cwedge{2} \beta - {\textstyle{\frac{1}{6}}}\alpha \cwedge{3} \beta, \\
 \beta \alpha &=& (-1)^p( \alpha \cwedge{0} \beta - \alpha \cwedge{1}
 \beta -{\textstyle{\frac{1}{2}}} \alpha \cwedge{2} \beta+{\textstyle{\frac{1}{6}}}\alpha \cwedge{3} \beta),\\
 {[} \alpha, \beta {]}_p &=& 2 \alpha \cwedge{0}\beta- \alpha \cwedge{2}  \beta, \\
 {[} \alpha, \beta {]}_{p+1} &=& 2 \alpha \cwedge{1} \beta -
 {\textstyle{\frac{1}{3}} }\alpha \cwedge{3} \beta.
\end{eqnarray}

Let $\alpha \in \Omega^4(M), \beta \in \Omega^p(M)$, then
\begin{eqnarray}
 \alpha \beta &=& \alpha \cwedge{0} \beta - \alpha \cwedge{1} \beta -
 {\textstyle{\frac{1}{2}}} \alpha \cwedge{2} \beta +
 {\textstyle{\frac{1}{6}}}\alpha \cwedge{3} \beta
 +{\textstyle{\frac{1}{24}}} \alpha \cwedge{4} \beta,\\
 \beta \alpha &=&  \alpha \cwedge{0} \beta + \alpha \cwedge{1}
 \beta -{\textstyle{\frac{1}{2}} }\alpha \cwedge{2} \beta-{\textstyle{\frac{1}{6}}}\alpha \cwedge{3} \beta+{\textstyle{\frac{1}{24}}} \alpha \cwedge{4} \beta,\\
 {[} \alpha, \beta {]}_+ &=& 2 \alpha \cwedge{0}\beta- \alpha \cwedge{2}
 \beta + {\textstyle{\frac{1}{24}} }\alpha \cwedge{4} \beta, \\
 {[} \alpha, \beta {]}_- &=& 2 \alpha \cwedge{1} \beta -
 {\textstyle{\frac{1}{3}}} \alpha \cwedge{3} \beta.
\end{eqnarray}

\subsection{Contracted Wedge Product Identities}

In order to work with the contracted wedge product introduced
in Section \ref{Cliffsec}, it is convenient to derive some identities to help with
calculations. We collect some results here. Firstly some algebraic results.

\begin{lemma} \label{cwplem1}
Suppose $\alpha \in \Omega^p(M)$, then the interior derivative $X \hook$ satisfies
\be
X \hook (\alpha \cwedge{n}\beta) = (-1)^{n} (X \hook \alpha)
\cwedge{n} \beta + (-1)^p \alpha \cwedge{n} (X \hook \beta)\,.
\ee
\end{lemma}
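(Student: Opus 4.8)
The plan is to induct on $n$. For the base case $n=0$, we have $\alpha \cwedge{0}\beta = \alpha\wedge\beta$, and the claimed identity is just the statement that $X\hook$ is an antiderivation of degree $-1$ with respect to the wedge product, $X\hook(\alpha\wedge\beta) = (X\hook\alpha)\wedge\beta + (-1)^p\,\alpha\wedge(X\hook\beta)$, which holds for any $p$-form $\alpha$ and matches the formula since $(-1)^0 = 1$.

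For the inductive step, assume the identity holds at level $n-1$ for all forms (of all degrees), with the same vector $X$. Using the inductive definition $\alpha\cwedge{n}\beta = (X_a\hook\alpha)\cwedge{n-1}(X^a\hook\beta)$, and noting that $X\hook$ passes through the finite sum over $a$, I would apply the inductive hypothesis to the $(p-1)$-form $X_a\hook\alpha$ and the form $X^a\hook\beta$ to obtain
\[
X\hook\bigl((X_a\hook\alpha)\cwedge{n-1}(X^a\hook\beta)\bigr) = (-1)^{n-1}(X\hook X_a\hook\alpha)\cwedge{n-1}(X^a\hook\beta) + (-1)^{p-1}(X_a\hook\alpha)\cwedge{n-1}(X\hook X^a\hook\beta)\,.
\]
Now invoke the anticommutativity of interior products, $X\hook X_a\hook\alpha = -X_a\hook(X\hook\alpha)$ and $X\hook X^a\hook\beta = -X^a\hook(X\hook\beta)$. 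Substituting, the first term becomes $-(-1)^{n-1}(X_a\hook(X\hook\alpha))\cwedge{n-1}(X^a\hook\beta) = (-1)^n\,(X\hook\alpha)\cwedge{n}\beta$ by the inductive definition applied to the $(p-1)$-form $X\hook\alpha$; the second term becomes $-(-1)^{p-1}(X_a\hook\alpha)\cwedge{n-1}(X^a\hook(X\hook\beta)) = (-1)^p\,\alpha\cwedge{n}(X\hook\beta)$. Adding the two gives exactly the asserted formula.

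The argument is purely algebraic and pointwise, so there is no analytic subtlety; the only thing requiring care — the "main obstacle" — is the sign bookkeeping. One must apply the inductive hypothesis with the degree shift recorded correctly (so that the coefficient is $(-1)^{p-1}$, reflecting that $X_a\hook\alpha$ is a $(p-1)$-form) and combine it with the anticommutation sign $(-1)$ from moving $X\hook$ past $X_a\hook$ (respectively $X^a\hook$), which is what turns $(-1)^{n-1}$ into $(-1)^n$ and leaves $(-1)^{p-1}\cdot(-1) = (-1)^p$.
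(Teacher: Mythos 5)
Your proof is correct and follows essentially the same route as the paper's: induction on $n$ using the inductive definition of $\cwedge{n}$, the antiderivation property of $X\hook$ at level $n-1$ applied to the $(p-1)$-form $X_a\hook\alpha$, and the anticommutativity of interior products to restore the signs $(-1)^n$ and $(-1)^p$. The paper's version is just a terser rendering of the identical sign bookkeeping you spell out.
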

\begin{proof}
By induction. The $n=0$ case is trivial. Suppose true for $n-1$, we have
\begin{eqnarray*}
X \hook (\alpha \cwedge{n}\beta) &=& X \hook (X_a \hook \alpha
\cwedge{n-1} X^a \hook \beta) \nonumber \\ &=& (-1)^{n-1}(X \hook X_a
\hook \alpha \cwedge{n-1} X^a \hook \beta)+ (-1)^{p-1}( X_a
\hook \alpha \cwedge{n-1} X \hook X^a \hook \beta) \nonumber \\ &=& (-1)^{n} (X \hook \alpha)
\cwedge{n} \beta + (-1)^p \alpha \cwedge{n} (X \hook \beta)\,.
\end{eqnarray*}
\end{proof}

\begin{lemma} \label{cwplem2}
Suppose $\omega \in \Omega^1(M)$ and  $\alpha \in \Omega^p(M)$, then
\begin{eqnarray}
(\omega \wedge \alpha) \cwedge{n} \beta &=& (-1)^n \omega \wedge
(\alpha \cwedge{n} \beta) + n\, \alpha \cwedge{n-1} (\omega^\sharp
\hook \beta)\,, \label{cwpid1}\\
\alpha \cwedge{n} (\omega \wedge \beta) &=& (-1)^p \omega \wedge
(\alpha \cwedge{n} \beta) + n \, (\omega^\sharp
\hook \alpha ) \cwedge{n-1} \beta\,. \label{cwpid2}
\end{eqnarray}
\end{lemma}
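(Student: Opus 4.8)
The plan is to prove both identities by induction on the contraction order $n$, treating $\alpha$ and $\beta$ as forms of \emph{arbitrary} degree so that the inductive hypothesis may legitimately be re-applied after contracting with frame vectors. The three building blocks I would use throughout are: the recursive definition $\alpha \cwedge{n}\beta = (X_a \hook \alpha)\cwedge{n-1}(X^a\hook\beta)$; the graded Leibniz rule for the interior product against a wedge with a $1$-form, $X\hook(\omega\w\alpha)=(X\hook\omega)\,\alpha-\omega\w(X\hook\alpha)$; and the anticommutation $\omega^\sharp\hook X^a\hook = -X^a\hook\omega^\sharp\hook$ of two interior derivatives. The base case $n=0$ is immediate: for \eq{cwpid1} both sides collapse to $\omega\w\alpha\w\beta$ (the $n$-term drops out), and for \eq{cwpid2} one uses $\alpha\w\omega=(-1)^p\omega\w\alpha$ to reproduce $(-1)^p\,\omega\w(\alpha\w\beta)$.

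For the inductive step of \eq{cwpid1}, I would unfold the definition as $(\omega\w\alpha)\cwedge{n}\beta = (X_a\hook(\omega\w\alpha))\cwedge{n-1}(X^a\hook\beta)$ and apply the Leibniz rule, splitting into two pieces. The first piece, $(X_a\hook\omega)\,\alpha\cwedge{n-1}(X^a\hook\beta)$, collapses by bilinearity and $(X_a\hook\omega)X^a=\omega^\sharp$ into the single term $\alpha\cwedge{n-1}(\omega^\sharp\hook\beta)$. To the second piece, $(\omega\w(X_a\hook\alpha))\cwedge{n-1}(X^a\hook\beta)$, I apply the inductive hypothesis for each fixed $a$ with the $(p-1)$-form $X_a\hook\alpha$ playing the role of $\alpha$, and sum over $a$ by bilinearity. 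Recognising $(X_a\hook\alpha)\cwedge{n-1}(X^a\hook\beta)=\alpha\cwedge{n}\beta$, and — using the anticommutation of interior derivatives together with the definition — $(X_a\hook\alpha)\cwedge{n-2}(\omega^\sharp\hook X^a\hook\beta)=-\alpha\cwedge{n-1}(\omega^\sharp\hook\beta)$, reassembles everything. Collecting the coefficients $-(-1)^{n-1}=(-1)^n$ and $1+(n-1)=n$ yields the stated identity.

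The proof of \eq{cwpid2} runs in complete parallel: I would unfold $\alpha\cwedge{n}(\omega\w\beta)=(X_a\hook\alpha)\cwedge{n-1}(X^a\hook(\omega\w\beta))$, apply the Leibniz rule on the \emph{right} factor, peel off the contraction term as $(\omega^\sharp\hook\alpha)\cwedge{n-1}\beta$, apply the inductive hypothesis to the remainder (now with the degree-$(p-1)$ form $X_a\hook\alpha$, producing the sign $(-1)^{p-1}$), and use the same anticommutation identity to fold the leftover double contraction back into a single $\cwedge{n-1}$ product. The signs combine as $-(-1)^{p-1}=(-1)^p$, and again $1+(n-1)=n$.

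I do not anticipate any genuine conceptual obstacle; the content is a careful bookkeeping induction. The one place to stay vigilant is the sign and coefficient tracking in the inductive step — specifically keeping the degree of the contracted form $X_a\hook\alpha$ correct when invoking the inductive hypothesis, and correctly using $\omega^\sharp\hook X^a\hook=-X^a\hook\omega^\sharp\hook$ to convert the residual double contraction back into the recursive definition of $\cwedge{n-1}$. I would also remark that Lemma \ref{cwplem1} offers an alternative handle by relating $X\hook(\alpha\cwedge{n}\beta)$ to contractions of the factors, but the direct induction sketched above appears cleanest and self-contained.
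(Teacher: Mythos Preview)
Your proposal is correct and follows essentially the same route as the paper: induction on $n$, unfolding the recursive definition, splitting via the Leibniz rule for $X\hook(\omega\wedge\alpha)$, applying the inductive hypothesis to the residual $\omega\wedge(X_a\hook\alpha)$ piece, and using the anticommutation of interior derivatives to fold the leftover double contraction back into $\alpha\cwedge{n-1}(\omega^\sharp\hook\beta)$. The only cosmetic difference is that the paper first reduces by linearity to $\omega=e^a$, whereas you carry a general $\omega$ throughout; the logic and sign-tracking are otherwise identical.
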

\begin{proof}
Consider (\ref{cwpid1}). By linearity it suffices to consider the case
$\omega = e^a$. The $n=0$ case is straightforward. Suppose true for
$n-1$. Then
\begin{eqnarray*}
(e^a \wedge \alpha) \cwedge{n} \beta &=& X_b \hook (e^a \wedge \alpha)
\cwedge{n-1} X^b \hook \beta = \alpha \cwedge{n-1} X_a \hook \beta -
e^a\wedge (X_b \hook \alpha) \cwedge{n-1} X^b \hook \beta \nonumber \\
&=& \alpha \cwedge{n-1} X_a \hook \beta - (-1)^{n-1} e^a \wedge( X_b
\hook \alpha \cwedge{n-1} X^b \hook \beta)-(n-1)  X_b
\hook \alpha \cwedge{n-1}X^a \hook  X^b \hook \beta \nonumber \\
&=& (-1)^n e^a \wedge
(\alpha \cwedge{n} \beta) + n\, \alpha \cwedge{n-1} (X^a
\hook \beta).
\end{eqnarray*}
An entirely analogous calculation establishes (\ref{cwpid2}).
\end{proof}

We now consider the action of $\nt{}$, $\delta^T$ and $d^T$ on the
contracted wedge product

\begin{lemma}\label{cwplem3}
$\nt{}$ is a derivation over the contracted wedge product:
\be\label{A22}
\nt{} (\alpha \cwedge{n}\beta) = \nt{} \alpha \cwedge{n}\beta+
\alpha \cwedge{n}\nt{} \beta\,.
\ee
Moreover, when $\alpha \in \Omega^p(M)$, then
\begin{eqnarray}
\delta^T(\alpha \cwedge{n}\beta) &=& (-1)^n \delta^T \alpha \cwedge{n}
\beta - (-1)^p \nt{a} \alpha \cwedge{n} X^a \hook \beta \nonumber \\
&& \quad +(-1)^p  \alpha \cwedge{n}
\delta^T \beta - (-1)^n X^a \hook \alpha \cwedge{n} \nt{a} \beta\,.\label{A23}\\
d^T(\alpha \cwedge{n}\beta) &=& (-1)^n (d^T \alpha \cwedge{n}
\beta - n \nt{a} \alpha \cwedge{n-1} X^a \hook \beta )\nonumber \\
&& \quad +(-1)^p ( \alpha \cwedge{n}
d^T \beta - n X^a \hook \alpha \cwedge{n} \nt{a} \beta )\,.\label{A24}
\end{eqnarray}
\end{lemma}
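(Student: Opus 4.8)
The plan is to prove Lemma~\ref{cwplem3} by induction on $n$, reducing each statement to the degree-$1$ cases already established in Lemmas~\ref{cwplem1} and~\ref{cwplem2}. For \eqref{A22}, note that $\nt{}$ is metric (so it commutes with raising/lowering and with the musical isomorphisms) and is a derivation over the Clifford product; since $\alpha\cwedge{n}\beta = (X_a\hook\alpha)\cwedge{n-1}(X^a\hook\beta)$, the base case $n=0$ is just the Leibniz rule for $\nt{}$ over $\wedge$, and the inductive step uses that $\nt{}(X_a\hook\alpha) = (\nt{}X_a)\hook\alpha + X_a\hook\nt{}\alpha$ together with the fact that the extra connection-coefficient terms, being contractions of the torsion $3$-form, pair up between the two factors and cancel (this cancellation is exactly where one uses that we work in a basis satisfying \eqref{basiscond}).

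For \eqref{A23} and \eqref{A24}, the strategy is the same: write $\delta^T = -X^a\hook\nt{a}$ and $d^T = e^a\wedge\nt{a}$, apply \eqref{A22} to move $\nt{a}$ inside the contracted wedge, and then push the $X^a\hook$ (respectively $e^a\wedge$) through the contracted wedge using Lemma~\ref{cwplem1} (respectively the $\omega=e^a$ case of Lemma~\ref{cwplem2}). Concretely, for $\delta^T$ one has
\[
\delta^T(\alpha\cwedge{n}\beta) = -X^a\hook\big(\nt{a}\alpha\cwedge{n}\beta + \alpha\cwedge{n}\nt{a}\beta\big),
\]
and Lemma~\ref{cwplem1} expands each $X^a\hook(\cdots)$ into two terms, giving the four terms on the right-hand side of \eqref{A23} once one tracks the sign $(-1)^n$ from $X\hook$ acting on the first factor and $(-1)^p$ (or $(-1)^{p+1}$, absorbed after relabelling) from it acting on the second. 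For $d^T$, applying \eqref{cwpid1} with $\omega=e^a$ to $e^a\wedge(\nt{a}\alpha\cwedge{n}\beta)$ produces $(-1)^n e^a\wedge(\nt{a}\alpha\cwedge{n}\beta) + n\,\nt{a}\alpha\cwedge{n-1}(X^a\hook\beta)$, and similarly \eqref{cwpid2} handles the $\alpha\cwedge{n}(e^a\wedge\nt{a}\beta)$ piece, yielding \eqref{A24} after recognizing $e^a\wedge\nt{a}$ as $d^T$ on the appropriate factor.

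The main obstacle will be bookkeeping the signs and the index relabellings correctly — in particular, verifying that the dummy-index sums in terms like $\nt{a}\alpha\cwedge{n}X^a\hook\beta$ come out with the stated signs, since both $X\hook$ and $\wedge$ introduce degree-dependent signs and the contracted wedge itself shuffles interior products between the two arguments. A secondary subtlety is that, unlike in the torsion-free case, $d^T$ and $\delta^T$ are not nilpotent and do not satisfy the usual Leibniz rule over $\wedge$ alone; the correction terms involving $\nt{a}\alpha\cwedge{n-1}X^a\hook\beta$ (and its partner) are precisely the price paid, so one must be careful not to drop them. I would organize the write-up so that \eqref{A22} is proved first in full, then \eqref{A23} and \eqref{A24} are each dispatched in a few lines by the substitution-and-expand argument above, citing Lemmas~\ref{cwplem1} and~\ref{cwplem2} for the one-form manipulations rather than re-deriving them.
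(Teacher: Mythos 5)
Your proposal matches the paper's proof essentially line for line: \eqref{A22} is established by induction on $n$ with the torsion contributions from the basis condition \eqref{basiscond} cancelling by antisymmetry of $T$, and \eqref{A23}, \eqref{A24} are then obtained by writing $\delta^T=-X^a\hook\nt{a}$, $d^T=e^a\wedge\nt{a}$, applying \eqref{A22}, and pushing the interior product and wedge through via Lemmas \ref{cwplem1} and \ref{cwplem2}. This is correct and is the same route the paper takes.
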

\begin{proof}
Let us first prove \eq{A22}. This is clearly true for $n=0$. Suppose true for $n-1$, then
\begin{eqnarray*}
\nt{} (\alpha \cwedge{n}\beta) &=& \nt{} (X_a \hook \alpha
\cwedge{n-1} X^a \hook \beta) \nonumber \\ &=& \nt{} (X_a \hook \alpha)
\cwedge{n-1} X^a \hook \beta + X_a \hook \alpha 
\cwedge{n-1} \nt{} (X^a \hook \beta)\nonumber \\
&=& \frac{1}{2} T(X_b, X^a, X_c) X^c \hook \alpha \cwedge{n-1} X_a
\hook \beta + X_a \hook \nt{} \alpha \cwedge{n-1}X^a \hook \beta
\nonumber \\ && \quad + X_a \hook \alpha \cwedge{n-1}\frac{1}{2} T(X_b, X_a,
X^c) X_c \hook \beta + X_a \hook \alpha \cwedge{n-1} X^a \hook \nt{}
\beta \nonumber \\ &=& \nt{} \alpha \cwedge{n}\beta+
\alpha \cwedge{n}\nt{} \beta\,,
\end{eqnarray*}
using the anti-symmetry of the torsion. The properties \eq{A23} and \eq{A24} follow 
from \eq{A22}, Lemmas \ref{cwplem1}, \ref{cwplem2} and the fact that  $\delta^T = - X^a \hook \nt{a}$, $\delta^T = e^a \wedge \nt{a}$.
\end{proof}

\subsection{Bianchi Identities \label{bianchisec}}

We derive some Bianchi identities by repeated
application of the $T$-external derivative (\ref{extder}) to a general
form $\omega$. Working in a basis satisfying (\ref{basiscond})
\begin{eqnarray}
d^T d^T \omega &=& e^a \wedge \nabla^T_{X_a} \left(e^b \wedge
  \nabla^T_{X_b} \right) \nonumber\\ &=& e^a \wedge \frac{1}{2}T(X_a X^b, X_c)e^c
\wedge \nabla^T_{X_b} \omega + \frac{1}{2} e^a \wedge e^b \wedge
\left[\nabla^T_{X_a}, \nabla^T_{X_b} \right ] \omega \nonumber \\ &=& -X^a \hook
T \wedge \nabla^T_{X_a} \omega + \frac{1}{2} e^a \wedge e^b \wedge
R(X_a, X_b) \omega\,.
\end{eqnarray}
Alternatively, we may express the $T$-exterior derivative as
\begin{eqnarray}
d^T d^T\omega &=& (d-T \cwedge{1} ) (d\omega-T \cwedge{1}
  \omega) \nonumber \\
&=& - T \cwedge{1} d^T \omega - d^T ( T \cwedge{1} \omega ) - T
{\cwedge{1}} (T \cwedge{1} \omega)\,.
\end{eqnarray}
Now, making use of Lemma \ref{cwplem3}, we find
\begin{equation}
\frac{1}{2} e^a \wedge e^b \wedge R(X_a, X_b) \omega = d^TT \cwedge{1}
\omega - \nabla^T_{X_b} T \wedge X^b \hook \omega -  T
{\cwedge{1}} (T \cwedge{1} \omega).
\end{equation}

By an analogous calculation we deduce that
\begin{equation}
\delta^T \delta^T \omega = \frac{1}{2} X^a \hook
T \cwedge{2} \nabla^T_{X_a} \omega + \frac{1}{2} X^a \hook X^b \hook
R(X_a, X_b) \omega \label{bianchi1}
\end{equation}
and
\begin{equation}
X^a \hook X^b \hook R(X_a, X_b) \omega = -\delta^TT \cwedge{2}
\omega - \nabla^T_{X_b} T \cwedge{2} X^b \hook \omega -  \frac{1}{2}T
{\cwedge{2}} (T \cwedge{2} \omega).
\end{equation}

We now deduce a Bianchi identity for the
curvature forms $R_{ab}$ defined by 
\[
R_{ab}(X,Y) = X_a \hook R(X,Y)
e_b\,.
\]
These are anti-symmetric on their indices as may be shown by
considering the curvature operator acting on $g^{-1}(e_a, e_b)$. We
calculate
\begin{equation}
R_{ab} \wedge e^b = -R_{ba}\wedge e^b = -\frac{1}{2}(X_b \hook R(X_c,
X_d)e_a) e^b \wedge e^c \wedge e^d = -\frac{1}{2} e^c \wedge e^d
\wedge R(X_c,X_d)e_a.
\end{equation}
Making use of (\ref{bianchi1}) we find
\begin{equation}
R_{ab} \wedge e^b = \nabla^T_{X_a} T - X_a \hook d^T T + (X_a \hook T)
\cwedge{1} T
\end{equation}
and readily deduce
\begin{eqnarray}
R_{ab} \wedge e^a \wedge e^b &=& -3 d^T T - 2 T \cwedge{1} T\,, \\
X^a \hook (R_{ab} \wedge e^b) &=& -\delta^T T.
\end{eqnarray}
Now, recall the definition of $\mathcal{R}_{X,Y}$
\begin{equation}
\mathcal{R}_{X,Y} = -\frac{1}{4} X \hook Y \hook R_{ab} e^{ab}.
\end{equation}
Making use of the rules for Clifford products we find
\begin{eqnarray}
e^{ab} \mathcal{R}_{X_a, X_b} &=& -\frac{1}{4}e^{ab} X_a \hook Y_b \hook
R_{cd} e^{cd} = \frac{1}{2}R_{ab} e^{ab} \nonumber \\
&=& \frac{1}{2}R_{ab}\wedge e^a \wedge e^b - X_a\hook(e^b \wedge
R_{ab}) + \frac{1}{2}X^a\hook X^b \hook R_{ab} \nonumber \\
&=& -\frac{3}{2} d^TT - T \cwedge{1} T + \delta^T T - \frac{1}{2} s.
\end{eqnarray}
Here $s$ is the scalar curvature of the connection with torsion, which
we take to be defined by
\begin{equation}
s = -X^a \hook X^b \hook R_{ab} = - X^a \hook R(X_a, X_b) e^b.
\end{equation}

\end{document}